\def\bzero{\mathbf{0}}
\newtheorem{prop}{Proposition}
\newtheorem{assume}{Assumption}
\newtheorem{condition}{Condition}
\newtheorem{theorem}{Theorem}
\newtheorem{remark}{Remark}
\newcommand{\trans}{^{\intercal}}
\def\Tsc{\mathcal{T}}
\def\bgamma{\boldsymbol{\gamma}}
\def\bbeta{\boldsymbol{\beta}}
\def\supone{^{\scriptscriptstyle (1)}}
\def\Isc{\mathcal{I}}
\def\bgammahat{\widehat{\bgamma}}
\def\U{\boldsymbol{U}}
\def\Esc{\mathcal{E}}
\def\Fsc{\mathcal{F}}
\def\Nsc{\mathcal{N}}
\def\ep{\ep }
\def\X{\boldsymbol{X}}
\def\x{\boldsymbol{x}}
\def\balpha{\boldsymbol{\alpha}}
\def\bmu{\boldsymbol{\mu}}
\def\supmfk{^{[\text{-}k]}}
\def\supmfkj{^{[\text{-}k,\text{-}j]}}
\def\submfkj{_{\text{-}k,\text{-}j}}
\def\bC{\boldsymbol{C}}
\def\D{\boldsymbol{D}}
\def\Lscr{\mathscr{L}}
\def\Pbb{\mathbb{P}}
\definecolor{darkred}{RGB}{150,50,50}
\definecolor{brown}{RGB}{250,100,100}
\definecolor{green}{RGB}{000,150,100}
\definecolor{purple}{RGB}{250,000,180}
\def\ep{\mathbb{E}}
\def\bfeta{\boldsymbol{\eta}}
\def\subhd{_{\sf \scriptscriptstyle HD}}
\def\subml{_{\sf \scriptscriptstyle ML}}
\def\Bsc{\mathcal{B}}
\def\Esc{\mathcal{E}}
\def\supone{^{\scriptscriptstyle (1)}}
\def\suptwo{^{\scriptscriptstyle (2)}}
\def\Z{\boldsymbol{Z}}
\begin{document}

\title{Double/Debiased Machine Learning for Logistic Partially \\ Linear Model}
\author{Molei Liu$^{1}$, Yi Zhang$^{2}$, Doudou Zhou$^{3}$}

\footnotetext[0]{This paper is a complete and more advanced version of a recent technical note by the author. See the note from \url{https://arxiv.org/abs/2008.12467}.}
\footnotetext[1]{Department of Biostatistics, Harvard Chan School of Public Health.}
\footnotetext[2]{Department of Statistics, Harvard University}
\footnotetext[3]{Department of Statistics, University of California, Davis.}

\maketitle

\begin{abstract}
\noindent 

We propose double/debiased machine learning approaches to infer (at the parametric rate $n^{-1/2}$) the parametric component of a logistic partially linear model with the binary response following a conditional logistic model of a low dimensional linear parametric function of some key (exposure) covariates and a nonparametric function adjusting for the confounding effect of other covariates. We consider a Neyman orthogonal (doubly robust) score equation consisting of two nuisance functions: nonparametric component in the logistic model and conditional mean of the exposure on the other covariates and with the response fixed. To estimate the nuisance models, we separately consider the use of high dimensional (HD) sparse parametric models and more general (typically nonparametric) machine learning (ML) methods. In the HD case, we derive certain moment equations to calibrate the first order bias of the nuisance models and grant our method a model double robustness property in the sense that our estimator achieves the desirable $O(n^{-1/2})$ rate when at least one of the nuisance models is correctly specified and both of them are ultra-sparse. 
In the ML case, the non-linearity of the logit link makes it substantially harder than the partially linear setting to use an arbitrary conditional mean learning algorithm to estimate nuisance component of the logistic model. We handle this obstacle through a novel ``full model refitting" procedure that is easy-to-implement and facilitates the use of nonparametric ML algorithms in our framework. Our ML estimator is rate doubly robust in the same sense as \cite{chernozhukov2016double}. We evaluate our methods through simulation studies and apply them in assessing the effect of emergency contraceptive (EC) pill on early gestation foetal with a policy reform in Chile in 2008 \citep{bentancor2017assessing}.

\end{abstract}

\section{Introduction}

Consider a logistic partially linear model. Let $\{(Y_i,A_i,\X_i):i=1,2,\ldots,n\}$ be independent and identically distributed samples of $Y\in\{0,1\}$, $A\in\mathbb{R}$ and $\X\in\mathbb{R}^p$. Assume that
\begin{equation}
\Pbb (Y=1\mid A,\X)={\rm expit}\{\beta_0 A+r_0(\X)\},
\label{model}
\end{equation}
where ${\rm expit}(\cdot)={\rm logit}^{-1}(\cdot)$, ${\rm logit}(a)=\log\{a/(1-a)\}$ and $r_0(\cdot)$ is an unknown nuisance function of $\X$. In an experimental or observational study with $A$ taken as the exposure variable, $Y$ being the binary response  of interests and $\X$ representing the  observed confounding variables, parameter $\beta_0$ is of particular interests as it measures the conditional effect of $A$ on $Y$ in the scale of logarithmic odds ratio. And as the most common and natural way to characterize the conditional model of a binary outcome against some exposure, model (\ref{model}) has been extensively used in economics and policy science studies. 

Our goal is to estimate and infer $\beta_0$ asymptotic normally at the rate $n^{-1/2}$. When $\X$ is a scalar and $r_0(\cdot)$ is smooth, classic semiparametric kernel or sieve regression works well for this purpose \citep{severini1994quasi,lin2006semiparametric}. When $\X$ is of high dimensionality, these approaches can have poor performance due to curse of dimensionality and one would rather estimate $r_0(\cdot)$ with modern HD  (parametric) or ML (nonparametric)\footnote{Our HD setting refers to HD parametric (linear or generalized linear) model and the ML setting refers to ML models of conditional mean estimation (prediction/classification) that is blackbox and usually nonparametric.} methods that are much more resistant to the growing dimensionality and complexity of $\X$. However, unlike the partially linear model scenario \citep{chernozhukov2018double,dukes2020inference}, robust and efficient inference of $\beta_0$ in (\ref{model}) with HD or ML nuisance models is yet to be well-studied. 


In recent, \cite{tan2019doubly} proposed a simple and flexible doubly robust estimator to enhance the robustness to the potential misspecification of $r(\x)$ specified as a fixed-dimensional parametric function: $r(\x)=\x\trans\bgamma$. They introduced a parametric model $m(\x)=g(\x\trans\balpha)$ with a (known) link function $g(\cdot)$ for the conditional mean model $m_0(\x)=\ep (A\mid Y=0,\X=\x)$ and proposed a doubly robust estimating equation:
\begin{equation}
\frac{1}{n}\sum_{i=1}^n\widehat\phi(\X_i)\left\{Y_ie^{-\beta A_i-\X_i\trans\widehat\bgamma}-(1-Y_i)\right\}\left\{A_i-g(\X_i\trans\widehat\balpha)\right\}=0,
\label{equ:dr:par}
\end{equation}
where $\widehat\phi(\x)$ is an estimation of some scalar nuisance function $\phi(\x)$ affecting the asymptotic efficiency of the estimator, and $\widehat\balpha$ and $\widehat\bgamma$ are two fixed dimensional nuisance model estimators. Estimator $\widehat\beta$ solved from (\ref{equ:dr:par}) is doubly robust that it is valid when either $r(\x)=\x\trans\bgamma$ is correctly specified for the logistic model nonparametric component, or $m(\x)=g(\x\trans\balpha)$ is correctly specified for the conditional mean model $m_0(\x)=\ep (A\mid Y=0,\X=\x)$. Prior to this, the doubly robust semiparametric estimation of odds ratio was built upon $p(A\mid \X,Y=0)$, the conditional density of $A$ given $\X$ and $Y=0$ \citep[e.g.]{yun2007semiparametric,tchetgen2010doubly}, requiring a stronger model assumption\footnote{\cite{yun2007semiparametric} and \cite{tchetgen2010doubly} estimate the conditional distribution of $A$ given $\X$ while \cite{tan2019doubly} only needs to specify a conditional mean model of $A$.} than \cite{tan2019doubly} for continuous $A$.

Nevertheless, \cite{tan2019doubly} focuses on fixed dimensional parametric nuisance models that are still prone to model misspecification. And their proposed approach is not readily applicable to the recently developed and exploited HD ($p\gg n$ and the two nuisance components are specified as parametric models with sparse coefficients) or ML (the nuisance functions are estimated by arbitrary blackbox learning algorithm of condition mean) approaches. For the HD realization, this is because simply using some regularized nuisance estimators in (\ref{equ:dr:par}) would typically incur excessive bias and not guarantee the parametric rate of convergence. We realize bias reduction with respect to the regularization errors by constructing certain dantzig moment equations to estimate the nuisance parameters. With the ultra-sparse nuisance parameters, i.e. their sparsity level is $o(n^{1/2}/\log p)$, our proposed estimator preserves the model double robustness property that it approaches $\beta_0$ at the rate $O_p(n^{-1/2})$ when either $r(\cdot)$ or $m(\cdot)$ is correctly specified. Under the ML framework, the non-linearity and unextractablity of the logit link makes it impossible to naturally estimate $r_0(\cdot)$ with a learning algorithm of conditional mean\footnote{Without purposed modification, the natural form of most contemporary supervised learning methods, e.g. random forest, support vector machine and neural network, can be conceptualized as a blackbox algorithm of conditional mean estimation because their task is prediction for continuous responses and classification for categorical responses.} as trivially done under the partially linear model \citep{chernozhukov2016double}. We handle this challenge through an easy-to-implement ``full model refitting" (FMR) procedure that facilitates flexible implementation of arbitrary conditional mean learning algorithms in our framework. And our DML estimator for $\beta_0$ is rate doubly robust in the same sense as \cite{chernozhukov2016double}, i.e. being asymptotically normal at rate $n^{-1/2}$ when the two nuisance ML estimators are consistent for the true models and their mean squared errors are controlled by $o_p(n^{-1/2})$.

In recent years, there has been a large body of literature developed for semiparametric inference (of a low-dimensional parameter) with HD and ML nuisance models, which gets arising attention and application in economics and policy sciences \citep{athey2017state,knaus2018double,knaus2020double,yang2020double}. Similar to the structure of our paper, there are two different frameworks among the literature, HD setting and ML setting, though sometimes both of them referred as ``machine learning" approaches. As the main difference between them, the HD setting imposes parametric assumptions on the nuisance models and may allow for their potential misspecification while ML setting uses nonparametric ML estimation supposed to approach the true nuisance models at certain geometric rate. 

To estimate low dimensional parameters like average treatment effect (ATE) and conditionl treatment effect in linear or log-linear model under the HD setting with potentially misspecified nuisance models, recent work including \cite{smucler2019unifying,tan2020model,tan2020regularized,ning2020robust,dukes2020inference} constructed $\ell_1$-regularized estimating equations with certain $\ell_{\infty}$-constraints to simultaneously estimate the nuisance parameters and calibrate their first order bias. 
In comparison, \cite{bradic2019sparsity} proposed a more sparsity-rate robust ATE estimator that requires substantially weaker sparsity assumptions but needs both HD parametric nuisance models to be correctly specified. Among these literature, our work is the first to consider the logistic partially linear model under a similar regime. In parallel to these but closely relevant to our target, existing approaches including debiased (desparsified) LASSO \citep{van2014asymptotically,jankova2016confidence} and regularized Riesz representer \citep{chernozhukov2018double,belloni2018high} used the empirical inverse of the information matrix obtained with $\ell_1$-regularized regression to correct for the bias of logistic LASSO estimator. They imposed a technical ultra-sparsity condition on inverse of the information matrix, which has been criticized as unreasonable and unverifiable \citep{xia2020revisit}. Compared with them, our sparsity assumption is model-specific and thus more reasonable and explainable as will be shown later.

We note that near the finishing of our paper, a parallel paper, \cite{ghosh2020doubly} is published on arxiv\footnote{As a preliminary version of our paper, our previous technical note \url{https://arxiv.org/abs/2008.12467} appears slightly earlier than \cite{ghosh2020doubly}.}. Compared with their work, our HD part is studying the same problem but using different doubly robust estimating equation and calibrated procedures for the nuisance models. While the two proposals have similar theoretical properties and numerical implementation strategy (see our Sections \ref{sec:method:hd}, \ref{sec:thm:hd} and Appendix \ref{sec:app:num}). Also, they should have similar numerical performance. Certainly, our method under the ML setting introduced below has no overlapping with their work.

For the nonparametric ML realization, \cite{chernozhukov2016double} established a double machine learning (DML) framework utilizing Neyman orthogonal scores and cross-fitting to construct parametrically efficient ML based casual estimator. Their framework has been playing a central role in semiparametric inference with ML. As complements or extensions of it, recent work including  \cite{chernozhukov2018doublelocal,zimmert2019nonparametric,colangelo2020double} localized the orthogonal score function to estimate conditional average treatment effect; \cite{semenova2020debiased} constructed the best linear approximation of a structural function with ML; \cite{farrell2018deep} used deep neural networks for DML estimation; \cite{wager2018estimation,oprescu2019orthogonal} proposed and studied the tree-based ML approaches for causal inference; and \cite{cui2019bias} proposed a minimax data-driven model selection approach to choose the ML nuisance models with the lowest bias on the DML estimator. The above mentioned work elaborated on specific inference problems including partially linear model, ATE, and heterogeneous treatment effect with their nuisance models directly estimable with arbitrary (supervised) ML algorithms. While as summarized above, $r_0(\cdot)$ cannot be estimated likewise with general ML algorithms, due to the non-linear structure of (\ref{model}). To the best of our knowledge, this paper is the first one to solve this non-trivial technical problem through the proposed FMR procedure.

The rest of this paper is organized as follows. In Section \ref{sec:moti}, we define the Neyman orthogonal (doubly robust) score equation for logistic partially linear model. In Section \ref{sec:method}, we introduce the realization of debiased and DML inference for $\beta_0$ under the HD and ML settings separately. In Section \ref{sec:thm}, we present and justify the asymptotic property of our HD and ML estimator separately. In Sections \ref{sec:sim} and \ref{sec:real}, we conduct simulation studies on empirical performance of our method and apply it to assess the effect of EC pill on early gestation foetal.

\section{Neyman orthogonal score}\label{sec:moti}

Before coming to the specific approaches in Section \ref{sec:method}, we introduce a Neyman orthogonal (doubly robust) score function for logistic partially linear model and derive its first order bias, which plays a central role in motivating and guiding our method construction and theoretical analysis. Let observation $\D_i=\{Y_i,A_i,\X_i\}$ for $i=1,\ldots,n$ and $\D=\{Y,A,\X\}$ be a realization of $\D_i$. Motivated by equation (\ref{equ:dr:par}) proposed and studied in \cite{tan2019doubly}, we define the Neyman orthogonal score as
\[
h(\D;\beta,\eta)=\psi(\X)\{Ye^{-\beta A}-(1-Y)e^{r(\X)}\}\{A-m(\X)\},
\]
where $\eta=\{r(\cdot),m(\cdot),\psi(\cdot)\}$ represents the whole set of nuisance functions. In analog to (\ref{equ:dr:par}), $r(\cdot)$ and $m(\cdot)$ corresponds to the nonparametric component $r_0(\cdot)$ defined in (\ref{model}) and the nuisance model $m_0(\x)=\ep (A\mid Y=0,\X=\x)$, respectively. And $\psi(\x)$ is a nuisance function affecting the asymptotic variance of the estimator that may depend on $r(\x)$ and $m(\x)$ and actually corresponds to $\phi(\x)e^{-r(\x)}$ with $\phi(\x)$ defined by (\ref{equ:dr:par})\footnote{We rewrite (\ref{equ:dr:par}) with $\psi(\x)=\phi(\x)e^{-r(\x)}$ to induce the form of score function with both its partial derivatives on $r$ and $\psi$ are Neyman orthogonal.}. 

\begin{remark}
\label{rem:2.1}
According to \cite{tan2019doubly}, the score function $h(\D;\beta,\eta)$ is doubly robust in the sense that when $r(\cdot)=r_0(\cdot)$ or $m(\cdot)=m_0(\cdot)$, $\beta_0$ solves $\ep h(\D;\beta,\eta)=0$. We shortly demonstrate this as follows. When either $r(\cdot)=r_0(\cdot)$ or $m(\cdot)=m_0(\cdot)$ holds, we have
\begin{align*}
&\ep\psi(\X)(1-Y)\{e^{r(\X)}-e^{r_0(\X)}\}\{A-m(\X)\}\\
=&\ep\left[\psi(\X)\{e^{r(\X)}-e^{r_0(\X)}\}\{A-m(\X)\}\Big| Y=0,\X\right]=0,
\end{align*}
which combined with (\ref{model}) lead to that
\begin{align*}
\ep h(\D;\beta_0,\eta)=&\ep\psi(\X)\{Ye^{-\beta_0 A}-(1-Y)e^{r(\X)}\}\{A-m(\X)\}\\
=&\ep\psi(\X)e^{r_0(\X)}\{Ye^{-\beta_0 A-r_0(\X)}-(1-Y)\}\{A-m(\X)\}\\
=&\ep\psi(\X)e^{r_0(\X)}\{A-m(\X)\}\frac{Y-\Pbb (Y=1\mid A,\X)}{\Pbb(Y=1\mid A,\X)}=0.
\end{align*}
\end{remark}
Suppose the nuisance models $r_0(\x)$ and $m_0(\x)$ are estimated by $\widehat r(\x)$ and $\widehat m(\x)$ converging to $\bar r(\x)$ and $\bar m(\x)$ and $\widehat\psi(\x)$ represents the estimator for $\psi(\x)$ approaching $\bar \psi(\x)$. Denote by $\bar\eta=\{\bar r(\cdot),\bar m(\cdot),\bar\psi(\cdot)\}$ and $\widehat\eta=\{\widehat r(\cdot),\widehat m(\cdot),\widehat\psi(\cdot)\}$. We then write the Gateaux (partial) derivative of the score function $h(\D;\beta_0,\bar\eta)$ as 
\begin{equation}
\begin{split}
&\partial_{\eta} h(\D;\beta_0,\bar\eta)[\eta-\bar\eta]\\
=&\partial_{\psi} h(\D;\beta_0,\bar\eta)[\psi-\bar\psi]+\partial_{r} h(\D;\beta_0,\bar\eta)[r-\bar r]+\partial_{m} h(\D;\beta_0,\bar\eta)[m-\bar m]\\
=:&\{Ye^{-\beta_0 A}-(1-Y)e^{\bar r(\X)}\}\{A-\bar m(\X)\}\{\psi(\X)-\bar\psi(\X)\}\\
&-(1-Y)\bar\psi(\X)e^{\bar r(\X)}\{A-\bar m(\X)\}\{r(\X)-\bar r(\X)\}\\
&-\bar\psi(\X)\{Ye^{-\beta_0 A}-(1-Y)e^{\bar r(\X)}\}\{m(\X)-\bar m(\X)\}.
\end{split}
\label{equ:derive}    
\end{equation}
\begin{remark}
\label{rem:2.2}
We evaluate the Neyman orthogonal score on some limiting parameters $\bar r(\cdot)$ and $\bar m(\cdot)$ instead of on $r_0(\cdot)$ and $m_0(\cdot)$ as in \cite{chernozhukov2016double}. This is because different from their ML framework (and ours) assuming both nuisance estimators converge to the true models, i.e. $\bar r(\cdot)=r_0(\cdot)$ and $\bar m(\cdot)=m_0(\cdot)$, our HD realization allows at most one nuisance model to be wrongly specified, and thus the score function to be analyzed with may not be evaluated at the true models.
\end{remark}
Inspired by our deduction in Remark \ref{rem:2.1}, $\ep\partial_{\psi} h(\D;\beta_0,\bar\eta)[\psi-\bar\psi]=0$ for any $\psi$ whenever $\bar r(\cdot)=r_0(\cdot)$ or $\bar m(\cdot)=m_0(\cdot)$. Also, $\ep\partial_{r} h(\D;\beta_0,\bar\eta)[r-\bar r]=0$ when $\bar m(\cdot)=m_0(\cdot)$ and $\ep\partial_m h(\D;\beta_0,\bar\eta)[m-\bar m]=0$ when $\bar r(\cdot)=r_0(\cdot)$. Thus, under the ML setting, $h(\D;\beta_0,\bar\eta)$ satisfies Neyman orthogonality defined in \cite{chernozhukov2016double} and the first order (over-fitting) bias of the estimating equation: $n^{-1}\sum_{i=1}^n h(\D_i;\beta,\widehat\eta)=0$ can be removed through cross-fitting (introduced in Section \ref{sec:method:ml}) and concentration. While under the HD parametric setting with $\bar r(\cdot)\neq r_0(\cdot)$ or $\bar m(\cdot)\neq m_0(\cdot)$, one needs to carefully construct the moment equations for $\bar r(\cdot)$ and $\bar m(\cdot)$ to ensure the orthogonality conditions. While similar to existing literature \citep{chernozhukov2016double,tan2020model}, removal of the second order (and beyond) bias relies on the assumptions that quality of the nuisance estimators $\widehat r(\cdot)$ and $\widehat m(\cdot)$ are good enough (see Section \ref{sec:thm}).

\section{Method}\label{sec:method}

In this section, we separately present our specific construction procedures for HD parametric and ML nonparametric realization of the debiased/DML estimator for $\beta_0$, based on the Neyman orthogonal score derived in Section \ref{sec:moti}.

\subsection{High dimensional parametric model realization}\label{sec:method:hd}

Consider the setting with $p\gg n$, each $\X_i$ has its first element being $1$, $r(\x)=\x\trans\bgamma$ and $m(\x)=g(\x\trans\balpha)$ where $g(\cdot)$ is a monotone and smooth link function with derivative $g'(\cdot)$. Inspired by existing work including \cite{smucler2019unifying,tan2020model,dukes2020inference}, we construct dantzig moment equations to ensure the Neyman orthogonality empirically: $\partial_{r} h(\D;\beta_0,\bar\eta)[r-\bar r]=0$ and $\partial_m h(\D;\beta_0,\bar\eta)[m-\bar m]=0$, under potential misspecification of (at most one) the nuisance models.

First, we obtain $\widetilde\bgamma$ as some initial estimator for $\bgamma$ that converges to some limiting parameter $\bgamma^*$ equaling to the true model parameter $\bgamma_0$ when $r(\x)$ is correctly specified: $r(\x)=\x\trans\bgamma_0$. And let $\widehat\psi(\x)$ be some estimator of the nuisance function $\psi(\x)$ depending on $\widetilde\bgamma$ with its limiting function being $\bar\psi(\x)$, whose choice will be discussed in Section \ref{sec:method:eff} later. According to (\ref{equ:derive}), we obtain $\widehat\balpha$ through the dantzig moment equation:
\begin{equation}
{\rm min}_{\balpha\in\mathbb{R}^p}\|\balpha\|_1\quad{\rm s.t}\quad\left\|n^{-1}\sum_{i=1}^n(1-Y_i)\widehat\psi(\X_i)e^{\X_i\trans\widetilde\bgamma}\{A_i-g(\X_i\trans\balpha)\}\X_i\right\|_{\infty}\leq\lambda_{\alpha},
\label{equ:dant:m}    
\end{equation}
where $\lambda_{\alpha}$ is a tuning parameter controlling the regularization bias. Finally, we obtain the nuisance estimator $\widehat\bgamma$ and the targeted HD estimator $\widehat\beta\subhd$ simultaneously from:
\begin{equation}
\label{equ:dant:r}   
\begin{split}
{\rm min}_{\beta\in\mathbb{R},\bgamma\in\mathbb{R}^p}\|\bgamma\|_1\quad{\rm s.t}\quad\left\|n^{-1}\sum_{i=1}^n\widehat\psi(\X_i)\{Y_ie^{-\beta A_i}-(1-Y_i)e^{\X_i\trans\bgamma}\}g'(\X_i\trans\widehat\balpha)\X_i\right\|_{\infty}&\leq\lambda_{\gamma};\\
n^{-1}\sum_{i=1}^n\widehat\psi(\X_i)\{Y_ie^{-\beta A_i}-(1-Y_i)e^{\X_i\trans\bgamma}\}\{A_i-g(\X_i\trans\widehat\balpha)\}&=0.
\end{split}
\end{equation}
Let the limits of $\{\widehat\balpha,\widehat\bgamma\}$ be $\{\bar\balpha,\bar\bgamma\}$, and $\bar\eta=\{\bar r(\cdot),\bar m(\cdot),\bar\psi(\cdot)\}$ where $\bar r(\x)=\x\trans\bar\bgamma$, $\bar m(\x)=g(\x\trans\bar\balpha)$ and $\bar\psi(\x)$ as given in Section \ref{sec:method:eff}. We shall comment on the orthogonality (moment) conditions of our proposal in Remark \ref{rem:3.0}, compare our method with \cite{dukes2020inference} in Remark \ref{rem:3.2}, and discuss its numerical implementation with a weighted LASSO formation in Remark \ref{rem:3.1}.

\begin{remark}
\label{rem:3.0}
Neglect the second (and above) order error terms for now. When $r(\x)$ is correct (see Assumption \ref{asu:hd:1}), i.e. $r_0(\x)=\x\trans\bgamma_0$, it naturally holds that $\ep\partial_m h(\D;\beta_0,\bar\eta)[\widehat m-m_0]=0$ and $\bgamma^*=\bar\bgamma=\bgamma_0$. Then the $\ell_{\infty}$-constraint in (\ref{equ:dant:m}) imposes that
\[
\ep\partial_{r} h(\D;\beta_0,\bar\eta)[\widehat r-r_0]\approx\ep(1-Y)\bar\psi(\X)e^{\X\trans\bgamma_0}\{A-g(\X\trans\bar\balpha)\}\X\trans(\bgammahat-\bgamma_0)=\bzero\trans(\bgammahat-\bgamma_0).
\]
When $\bar m(\x)=m_0(\x)=g(\x\trans\balpha_0)$, we have $\ep\partial_{r} h(\D;\beta_0,\bar\eta)[\widehat r-r_0]$ and $\bar\balpha=\balpha_0$ in turn. And the $\ell_{\infty}$-constraint of (\ref{equ:dant:r}) corresponds to
\[
\ep\partial_m h(\D;\beta_0,\bar\eta)[\widehat m-m_0]\approx\ep\bar\psi(\X)\{Ye^{-\beta_0 A}-(1-Y)e^{\X\trans\bar\bgamma}\}g'(\X_i\trans\balpha_0)\X\trans(\widehat\balpha-\balpha_0)=\bzero\trans(\widehat\balpha-\balpha_0).
\]
Thus, the Neyman orthogonality condition $\partial_{\eta} h(\D;\beta_0,\bar\eta)[\eta-\bar\eta]=0$ as introduced in Section \ref{sec:moti} is satisfied under our construction when either $r(\cdot)$ or $m(\cdot)$ is correctly specified.
\end{remark}

\begin{remark}
\label{rem:3.2}
Similar the HD partially linear (or log-linear) setting studied in \cite{dukes2020inference}, estimating equation for the nuisance parameter $\bgamma$ in our framework involves the unknown $\beta$. Unlike their construction procedure that plug-in $\beta_0$ as every $\beta\in\mathbb{R}$ to estimate $\bgamma$ and invert the resulted score-test $p$-values for interval estimation of $\beta_0$, we solve for $\widehat\beta$ and $\widehat\bgamma$ jointly from (\ref{equ:dant:r}) with the moment equation for $\widehat\bbeta$ being doubly robust, as demonstrated in Remark \ref{rem:2.1}. Compared with their method, ours is more friendly in computation and implementation, additionally provides $n^{-1/2}$-consistent point estimator of $\beta_0$ and preserves similar theoretical guarantee (see Section \ref{sec:thm:hd}).
\end{remark}

\begin{remark}
\label{rem:3.1}
As is detailed in Appendix \ref{sec:app:num}, one could also construct LASSO problems with the same Karus--Kuhn--Tucker (KKT) conditions as the $\ell_{\infty}$-norm constraints in (\ref{equ:dant:m}) and (\ref{equ:dant:r}), to obtain the estimators $\widehat\balpha$ and $\widehat\bgamma$, which has equivalent theoretical properties as the dantzig equations.\footnote{Here we present the dantzig equation version because it is more intuitive and directly connected with the Neyman orthogonal conditions.} Due to the non-convexity\footnote{Its partial derivative on $\beta$, $n^{-1}\sum_{i=1}^n-\widehat\psi(\X_i)Y_ie^{-\beta A_i}A_i\{A_i-g(\X_i\trans\widehat\balpha)\}$ is not always positive or negative definite, through empirically and theoretically, the partial derivative should be negative with very high chances.} of the equation in the second row of (\ref{equ:dant:r}), numerical solution of the LASSO counterpart of (\ref{equ:dant:r}) cannot be obtained with existing software like {\bf R} package ``glmnet" \citep{friedman2010regular} and ``RCAL" \citep{tan2019rcal}. A direct solution to this is programming an optimization procedure such as the Fisher scoring descent algorithm used by \cite{tan2020regularized}. While we also find a convenient way that moderately modifies the construction procedure to make the regularized estimating equations solvable with {\bf R} package ``RCAL", and use it for the numerical implementation of our method. In Appendix \ref{sec:app:num}, we outline this modification and demonstrate its theoretical guarantee.
\end{remark}

\subsection{Machine learning realization}\label{sec:method:ml}

We turn to a (nonparametric) ML setting under which any learning algorithms of conditional mean could potentially be applied to estimate the nuisance functions. Similar to \cite{chernozhukov2016double}, we randomly split the $n$ samples into $K$ folds: $\Isc_1,\Isc_2,\ldots,\Isc_K$ of equal size, to assist removing the first order (over-fitting) bias through concentration. Then the cross-fitted estimating equation for $\beta$ is constructed as
\begin{equation}
n^{-1}\sum_{k=1}^K\sum_{i\in\Isc_k} h(\D_i;\beta,\widehat\eta\supmfk)=0,
\label{equ:dr:cross}     
\end{equation}
where $\widehat\eta\supmfk=\{\widehat r\supmfk(\cdot),m\supmfk(\cdot),\psi\supmfk(\cdot)\}$, representing ML estimators converging to $\bar r(\cdot)=r_0(\cdot)$, $\bar m(\cdot)=m_0(\cdot)$ and $\bar\psi(\cdot)$, obtained with the samples in $\Isc_{\text{-}k}=\{1,\ldots,n\}\setminus\Isc_k$ and independent of the samples in $\Isc_k$. Now we present the specific construction procedures for $\widehat r\supmfk(\cdot)$ and $m\supmfk(\cdot)$ with the choice of $\psi\supmfk(\cdot)$ again discussed in Section \ref{sec:method:eff}.

Suppose there is a blackbox learning algorithm $\Lscr(R_i,\bC_i;\Isc)$ that inputs samples $\Isc\subseteq\{1,2,\ldots,n\}$ with some response $R_i$ and covariates $\bC_i$, and outputs an estimation of $\ep[R_i\mid\bC_i,i\in\Isc]$. We outline as follows our approach utilizing $\Lscr$ to estimate the nuisance functions. Corresponding to the definition of $m_0(\cdot)$, it can be estimated by: $\widehat m\supmfk(\cdot)=\mathscr{L}(A_i,\X_i;\Isc_{\text{-}k}\cap\{i:Y_i=0\})$. Compared to the partially linear setting in \cite{chernozhukov2016double}, estimation of $r_0(\cdot)$ with $\mathscr{L}$ is more sophisticated since it is defined through a non-linear form: $\Pbb (Y=1\mid A,\X)={\rm expit}\{\beta_0 A+r_0(\X)\}$. One could modify some ML approaches, e.g. neural network\footnote{By setting the last layer of the neural network to be the combination of a complex network of $\X$ and a linear function of $A$ and linking it with the outcome through an expit link.} to accommodate this form. However, such modification is not readily available in general, and typically requires additional human efforts on its implementing and validating if it exists. So alternatively, we propose a ``full model refitting" (FMR) procedure using an arbitrary $\mathscr{L}$ to estimate $r_0(\cdot)$. Our method is motivated by a simple proposition:
\begin{prop}
\label{prop:1}
Let $M_0(A,\X)=\Pbb (Y=1\mid A,\X)={\rm expit}\{\beta_0 A+r_0(\X)\}$. We have:
\[
\beta_0={\rm argmin}_{\beta\in\mathbb{R}} \ep\left[{\rm logit}\{M_0(A,\X)\}-\beta(A-\ep [A|\X])\right]^2.
\]
\end{prop}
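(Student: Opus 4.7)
The plan is to exploit the explicit form $\mathrm{logit}\{M_0(A,\X)\}=\beta_0 A+r_0(\X)$, which turns the objective into a simple quadratic in $\beta$ whose first order condition can be solved in closed form using iterated expectations.

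First I would substitute the logit identity to rewrite the objective as
\[
Q(\beta):=\ep\bigl[\beta_0 A+r_0(\X)-\beta(A-\ep[A\mid\X])\bigr]^2.
\]
Because this is a scalar quadratic in $\beta$, setting its derivative to zero gives the necessary and sufficient first-order condition
\[
\ep\bigl[\{\beta_0 A+r_0(\X)-\beta(A-\ep[A\mid\X])\}\,(A-\ep[A\mid\X])\bigr]=0,
\]
provided $\ep[(A-\ep[A\mid\X])^2]=\ep[\mathrm{Var}(A\mid\X)]>0$, which I would flag as a mild identifiability condition (if it fails then $A$ is a deterministic function of $\X$ and $\beta_0$ is not even identifiable from \eqref{model}).

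Next I would simplify each of the three inner products. By the tower property,
\[
\ep[r_0(\X)(A-\ep[A\mid\X])]=\ep\bigl[r_0(\X)\,\ep\{A-\ep[A\mid\X]\mid\X\}\bigr]=0,
\]
which kills the nuisance nonparametric component. Likewise,
\[
\ep[A(A-\ep[A\mid\X])]=\ep\bigl[\ep\{A(A-\ep[A\mid\X])\mid\X\}\bigr]=\ep[\mathrm{Var}(A\mid\X)]
=\ep[(A-\ep[A\mid\X])^2].
\]
Plugging these identities into the first-order condition and dividing by the common factor $\ep[\mathrm{Var}(A\mid\X)]$ yields $\beta=\beta_0$. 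Since $Q(\beta)$ is a quadratic in $\beta$ with strictly positive leading coefficient $\ep[\mathrm{Var}(A\mid\X)]>0$, this stationary point is the unique global minimizer, completing the proof.

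The argument is essentially a routine application of orthogonal projection in $L^2$, so there is no real obstacle; the only substantive issue is the identifiability caveat $\ep[\mathrm{Var}(A\mid\X)]>0$, which should be either assumed throughout the paper or stated as a hypothesis of the proposition.
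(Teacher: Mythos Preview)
Your proof is correct and essentially the same as the paper's: both substitute $\mathrm{logit}\{M_0(A,\X)\}=\beta_0 A+r_0(\X)$ and exploit the $L^2$-orthogonality of $A-\ep[A\mid\X]$ to functions of $\X$. The only cosmetic difference is that the paper completes the square by writing the integrand as $(\beta_0-\beta)(A-\ep[A\mid\X])+\eta(\X)$ with $\eta(\X)=r_0(\X)+\beta_0\ep[A\mid\X]$ and reads off the minimizer directly, whereas you reach the same conclusion via the first-order condition; your explicit mention of the identifiability condition $\ep[\mathrm{Var}(A\mid\X)]>0$ is a useful addition that the paper leaves implicit.
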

\begin{proof}
For any $\beta\in\mathbb{R}$, we have
\begin{align*}
&\ep \left[{\rm logit}\{M_0(A,\X) \}-\beta(A-\ep [A|\X])\right]^2=\ep \left\{\beta_0 A+r_0(\X)-\beta(A-\ep [A|\X])\right\}^2\\
=&\ep \left\{(\beta_0-\beta)(A-\ep [A|\X])+\eta(\X)\right\}^2=(\beta_0-\beta)^2\ep (A-\ep [A|\X])^2+\ep \{\eta(\X)\}^2,
\end{align*}
where $\eta(\X)=r_0(\X)+\beta_0\ep [A|\X]$. Thus, $\beta_0$ minimizes $\ep \left[{\rm logit}\{M_0(A,\X) \}-\beta(A-\ep [A|\X])\right]^2$.
\end{proof}
Further randomly each split $\Isc_{\text{-}k}$ into $K$ folds $\Isc_{\text{-}k,1},\ldots\Isc_{\text{-}k,K},$ of equal size and denote by $\Isc\submfkj=\Isc_{\text{-}k}\setminus\Isc_{\text{-}k,j}$. Motivated by Proposition \ref{prop:1}, we first estimate the ``full" model $M_0(A,\X) $ with $\Isc_{\text{-}k,\text{-}j}$ as: 
\[
\widehat M\supmfkj(\cdot)=\mathscr{L}(Y_i,(A_i,\X_i\trans)\trans;\Isc\submfkj),
\]
and learn $a_0(\x)=\ep[A|\X=\x]$ as $\widehat a\supmfkj(\cdot)=\mathscr{L}(A_i,\X_i;\Isc\submfkj)$. Then we fit the (cross-fitted) least square regression to obtain:
\begin{equation}
\breve\beta\supmfk={\rm argmin}_{\beta\in\mathbb{R}}\frac{1}{|\Isc_{\text{-}k}|}\sum_{j=1}^K\sum_{i\in\Isc_{\text{-}k,j}}\left[{\rm logit}\{\widehat M\supmfkj(A_i,\X_i)\}-\beta\{A_i-\widehat a\supmfkj(\X_i)\}\right]^2,
\label{equ:est:init}
\end{equation}
as an estimator approaching $\beta_0$ at certain rate typically larger than $n^{-1/2}$. Then $r_0(\cdot)$ could be identified through $r_0(\X_i)={\rm logit}\{M_0(A_i,\X_i)\}-\beta_0 A_i$. Note that the empirically estimated version of ${\rm logit}\{M_0(A_i,\X_i)\}-\beta_0 A_i$ typically involves $A_i$ due to the discrepancy of $\beta_0$ and $M(\cdot)$ from their empirical estimation. This can essentially impede removal of the over-fitting bias since $\partial_{r} h(\D;\beta_0,\bar\eta)$ is not orthogonal to the error functions dependending on $A$. So we further estimate the conditional mean of ${\rm logit}\{M_0(A,\X) \}-\beta_0 A$ on $\X$ to estimate $r_0(\cdot)$. Denote by $W_i={\rm logit}\{\widehat M\supmfkj(A_i,\X_i)\}$ for each $i\in\Isc_{\text{-}k,j}$ and get $\widehat t\supmfk(\cdot)=\mathscr{L}(W_i,\X_i;\Isc_{\text{-}k})$ to estimate $t_0(\x)=:\ep [{\rm logit}\{M_0(A,\X) \}|\X=\x]$. Then the estimator of $r_0(\cdot)$ is given by:
\begin{equation}
\widehat r\supmfk(\x)=\widehat t\supmfk(\x)-\breve\beta\supmfk \widehat a\supmfk(\x),\quad\mbox{where}\quad\widehat a\supmfk(\x)=\frac{1}{K}\sum_{j=1}^K\widehat a\supmfkj(\x).
\label{equ:est:r:1}
\end{equation}
Alternatively, one can estimate $r_0(\cdot)$ through
\begin{equation}
\widehat r\supmfk(\cdot)=\log\left(\frac{\mathscr{L}(e^{-\breve\beta\supmfk A_i},\X_i;\Isc_{\text{-}k}\cap\{i:Y_i=1\})}{\mathscr{L}(1-Y_i,\X_i;\Isc_{\text{-}k})}\right),
\label{equ:est:r:2}
\end{equation}
inspired by the moment condition that is sufficient to identify $r_0(\cdot)$:
\[
\ep \left[Ye^{-\beta_0 A}-(1-Y)e^{r_0(\X)}\Big|\X\right]=\ep \left[e^{-\beta_0 A}\Big|\X,Y=1\right]-e^{r_0(\X)}\ep \left[(1-Y)\big|\X\right]=0.
\]
We refer the estimation step for $\breve\beta\supmfk$ and $\widehat r\supmfk(\cdot)$ introduced above as ``refitting", and the whole procedure as FMR since we ``refit" the least square problem (\ref{equ:est:init}) and ML models $\Lscr$ to estimate $r_0(\cdot)$ with the initially estimated full model ${\rm logit}\{M_0(A_i,\X_i)\}$ as a psuedo-outcome. Finally, we solve (\ref{equ:dr:cross}) based on $\widehat\eta\supmfk$ to obtain the DML estimator $\widehat\beta\subml$.
\begin{remark}
We further use cross-fitting in FMR to avoid over-fitting of the models $\widehat M\supmfkj(\cdot)$ and $a\supmfkj(\cdot)$ when they are used to obtain the estimators $\breve\beta\supmfk$, $\widehat t\supmfk(\x)$ and $\widehat r\supmfk(\x)$.
\label{rem:3.3}
\end{remark}

\begin{remark}
The FMR implicitly assumes that $\Lscr$ should perform similarly well on different learning objects with the covariates set as either $\X$ or $(A,\X\trans)\trans$. Classic nonparametric approaches like kernel smoothing or sieve may not satisfy this since including one more covariate $A$ in addition to the very low dimensional $\X$ can have substantial impact on estimation performance. Thus, we recommend using more dimensionality-robust modern ML approaches, such as random forest and neural network, in our ML framework. While the classic ``plug-in" sieve or kernel method has been well-studied in existing literature \citep{severini1994quasi,lin2006semiparametric}.
\end{remark}

\subsection{Efficiency consideration}\label{sec:method:eff}

The nuisance function $\psi(\cdot)$ in our framework is included and chosen in consideration of estimation efficiency. \cite{tan2019doubly} proposed and studied two options for $\phi(\cdot)$ used and defined in (\ref{equ:dr:par}), with the corresponding function $\psi(\cdot)$ taken as:
\[
\psi_{\rm opt}(\x)=\frac{e^{-r(\x)}\ep [\{A-m(\X)\}^2|\X=\x,Y=0]}{\ep [\{A-m(\X)\}^2/{\rm expit}\{\beta_0 A+ r(\X)\}|\X=\x,Y=0]};\quad\psi_{\rm simp}(\x)={\rm expit}\{-r(\x)\}.
\]
It was shown that when both nuisance models are correctly specified, the estimator solved with the weight $\psi_{\rm opt}(\cdot)$ achieves the minimum asymptotic variance. However, computation of $\psi_{\rm opt}(\cdot)$ involves numerical integration with respect to $\X$ given $Y=0$, making it sometimes inconvenient to implement. So \cite{tan2019doubly} proposed a simplified but reasonable choice $\psi_{\rm simp}(\x)$ defined as above that is obtained by evaluating $\psi_{\rm opt}(\x)$ at $\beta_0=0$. In the following theoretical and numerical studies, we stick to $\psi(\x)=\psi_{\rm simp}(\x)$, $\widehat\psi(\x)={\rm expit}(-\x\trans\widetilde\bgamma)$ and correspondingly $\bar\psi(\x)={\rm expit}(-\x\trans\bgamma^*)$ under the HD setting, and $\widehat\psi\supmfk(\x)={\rm expit}\{-\widehat r\supmfk(\x)\}$ and $\bar\psi(\x)={\rm expit}\{-r_0(\x)\}$ under the ML setting.

\section{Asymptotic analysis}\label{sec:thm}

Let $o(\alpha_n)$, $O(\alpha_n)$, $\omega(\alpha_n)$, $\Omega(\alpha_n)$ and $\Theta(\alpha_n)$ represent the sequences growing at a smaller, equal/smaller, larger, equal/larger and equal rate of $\alpha_n$, respectively. And let $o_{\Pbb}$, $O_{\Pbb}$, $\omega_{\Pbb}$, $\Omega_{\Pbb}$ and $\Theta_{\Pbb}$ be the corresponding rates with probability approaching $1$ as $n\rightarrow \infty$. Let $\mathcal{X}\subseteq\mathbb{R}^p$ be the domain of $\X$. First, we introduce the regularity condition for $\beta$ and its estimating equation used under both HD and ML settings as Assumption \ref{asu:1}, which is standard and can be commonly found in literature of the asymptotic analysis of $M$-estimator \citep[Chapter 5]{van2000asymptotic}. And we shall then study the asymptotic properties of $\widehat\beta\subhd$ and $\widehat\beta\subml$ in Sections \ref{sec:thm:hd} and \ref{sec:thm:ml} respectively.

\setcounter{assume}{0}
\renewcommand{\theassume}{REG}

\begin{assume}[Regularity of estimating equation for $\beta$]
\label{asu:1}
Parameter $\beta$ belongs to a compact set $\Bsc\subseteq\mathbb{R}$ and there exists $\delta_n=\Omega(n^{-1/2}\log n)$ such that $(\beta_0-\delta_n,\beta_0+\delta_n)\subseteq\Bsc$. Exposure $A$ belongs to a compact set $\mathcal{A}$ and $\sup_{\x\in\mathcal{X}}|\ep[A|\X=\x,Y=y]|=O(1)$ for $y=0,1$. Also, it is satisfied that 
\[
\ep\bar\psi(\X)Ye^{-\beta_0 A}A\{A-\bar m(\X)\}=\Theta(1)\quad\mbox{and}\quad\ep h^2(\D;\beta_0,\bar\eta)=\Theta(1).\footnote{To accommodate the notations of both HD and ML, we use $\bar m(\cdot)$ and $\bar r(\cdot)$ to represent the limiting models defined as $g(\x\trans\bar\balpha)$ and $\x\trans\bar\bgamma$ under the HD setting and just the true models $m_0(\cdot)$ and $r_0(\cdot)$ under ML.}
\]
\end{assume}

\subsection{High dimensional (parametric) setting}\label{sec:thm:hd}
Let ${\rm expit}'(\cdot)$ be the derivative function of ${\rm expit}(\cdot)$, $\|\cdot\|_0$ represents the number of non-zero elements in a vector and $s=\max\{\|\bgamma^*\|_0,\|\bar\bgamma\|_0,\|\bar\balpha\|_0\}$. We introduce following assumptions to regularize the covariates and nuisance estimators.

\setcounter{assume}{0}
\renewcommand{\theassume}{HD\arabic{assume}}

\begin{assume}[Model double robustness]
At least one of the following conditions hold: (a) there exists $\bgamma_0\in\mathbb{R}^p$ such that $r_0(\x)=\x\trans\bgamma_0$ and $\bgamma^*=\bar\bgamma=\bgamma_0$; (b) there exists $\balpha_0\in\mathbb{R}^p$ such that $m_0(\x)=g(\x\trans\balpha_0)$ and $\bar\balpha=\balpha_0$.
\label{asu:hd:1}
\end{assume}

\begin{assume}[Concentration rate]
It holds that
\begin{align*}
\left\|n^{-1}\sum_{i=1}^n(1-Y_i){\rm expit}(-\X_i\trans\bgamma^*)e^{\X_i\trans\bar\bgamma}\{A_i-g(\X_i\trans\bar\balpha)\}\X_i\right\|_{\infty}=O_{\Pbb}\{(\log p/n)^{1/2}\};
\end{align*}
\begin{align*}
&\left\|n^{-1}\sum_{i=1}^n{\rm expit}(-\X_i\trans\bgamma^*)\{Y_ie^{-\beta_0 A_i}-(1-Y_i)e^{\X_i\trans\bar\bgamma}\}g'(\X_i\trans\bar\balpha)\X_i\right\|_{\infty}=O_{\Pbb}\{(\log p/n)^{1/2}\};\\
&\left\|n^{-1}\sum_{i=1}^n{\rm expit}'(-\X_i\trans\bgamma^*)\{Y_ie^{-\beta_0 A_i}-(1-Y_i)e^{\X_i\trans\bar\bgamma}\}\{A_i-g(\X_i\trans\bar\balpha)\}\X_i\right\|_{\infty}=O_{\Pbb}\{(\log p/n)^{1/2}\}.
\end{align*}
\label{asu:hd:2}
\end{assume}

\begin{assume}[Smooth link function]
\label{asu:hd:3}
There exists $L=\Theta(1)$ that for any $u,v\in\mathbb{R}$,
\[
|g'(u)-g'(v)|\leq L|u-v|.
\]
\end{assume}

\begin{assume}[Risk of the $\ell_1$-regularized estimators]
\label{asu:hd:4}
There exists tuning parameters $\lambda_{\alpha},\lambda_{\gamma}=\Theta\{(\log p/n)^{1/2}\}$ such that (\ref{equ:dant:m}) and (\ref{equ:dant:r}) have feasible solutions with probability approaching $1$ and
\begin{align*}
&\sup_{i\in\{1,\ldots,n\}}|g(\X_i\trans\widehat\balpha)|=O_{\Pbb}(1);\quad\|\widetilde\bgamma-\bgamma^*\|_1+\|\widehat\bgamma-\bar\bgamma\|_1+\|\widehat\balpha-\bar\balpha\|_1=O_{\Pbb}\{s(\log p/n)^{1/2}\};\\
&n^{-1}\sum_{i=1}^n\{1+e^{\X_i\trans\bar\bgamma}\}\left[\{\X_i\trans(\widehat\bgamma-\bar\bgamma)\}^2+\{\X_i\trans(\widetilde\bgamma-\bgamma^*)\}^2\right]+(\widehat\beta\subhd-\beta_0)^2=O_{\Pbb}(s\log p/n);\\
&n^{-1}\sum_{i=1}^n\{1+e^{\X_i\trans\bar\bgamma}\}\left[\{\X_i\trans(\widehat\balpha-\bar\balpha)\}^2+\{g(\X_i\trans\widehat\balpha)-g(\X_i\trans\bar\balpha)\}^2\right]=O_{\Pbb}(s\log p/n).
\end{align*}
\end{assume}

\begin{assume}[Ultra-sparsity]
It holds that $s=o(n^{1/2}/\log p)$.
\label{asu:hd:5}
\end{assume}

\begin{remark}
\label{rem:4.1.1}
Under Assumption \ref{asu:hd:1} and our constructions (\ref{equ:dant:m}) and (\ref{equ:dant:r}) (or the one introduced in Appendix \ref{sec:app:num}), the expectations of the terms to be concentrated in Assumption \ref{asu:hd:2} are $\bzero$ by Remark \ref{rem:3.0}. Then their maximum norms can be controlled by $O_{\Pbb}\{(\log p/n)^{1/2}\}$ as assumed in \ref{asu:hd:2}, when the covariates $\X_i$ are bounded, subgaussian or beyond \citep{kuchibhotla2018moving}, using the concentration results derived in existing literature \citep{gine2016mathematical}. 
\end{remark}

\begin{remark}
\label{rem:4.1.2}
Rates of the prediction and estimation risk of the nuisance estimators in Assumption \ref{asu:hd:4} can be derived following the general theoretical framework for $\ell_1$-regularized estimation introduced in \cite{candes2007dantzig,bickel2009simultaneous,buhlmann2011statistics,negahban2012unified}. The same rate properties has been used for analyzing doubly robust estimators with HD nuisance models in existing literature \citep{smucler2019unifying,tan2020model,dukes2020inference}. Note that (\ref{equ:dant:m}) and (\ref{equ:dant:r}) involves the estimators $\widetilde\bgamma$ or $\widehat\balpha$ obtained beforehand. This will require some additional effort on removing the  ``plug-in" errors of $\widetilde\bgamma$ or $\widehat\balpha$ when deriving the risk rates for $\widehat\balpha$ and $\widehat\bgamma$, compared to the standard analysis procedures. One could see \cite{tan2020model} for a similar technical issue and relevant technical details being used to handle it. 

In addition, $\sup_{i\in\{1,\ldots,n\}}|g(\X_i\trans\widehat\balpha)|=O_{\Pbb}(1)$ imposed in \ref{asu:hd:4} is not a standard assumption but is mild and very likely to hold: $\sup_{\x\in\mathcal{X}}|g(\x\trans\bar\balpha)|=O(1)$ according to Assumption \ref{asu:1} so we only need $g(\x\trans\widehat\balpha)-g(\x\trans\bar\balpha)$ to be $O_{\Pbb}(1)$ uniformly.

\end{remark}

\begin{remark}
\label{rem:4.1.3}
The ultra-sparsity assumption \ref{asu:hd:5} was also imposed in existing literature including \cite{tan2020model} and \cite{dukes2020inference}, to control the rate of bias incurred by the HD estimators: $s\log p/n$ below the parametric rate. For linear nuisance model, existing work like \cite{zhu2018significance} and \cite{dukes2020inference} suggested to add additional moment (KKT) constraints to relax the ultra-sparsity assumption. However, their approach has not been shown to be feasible for the case with non-linear models yet, so it may be promising but still remains unclear for our framework.
\end{remark}

We present the asymptotic property of $\widehat\beta\subhd$ in Theorem \ref{thm:2} and its proof in Appendix \ref{sec:app:thm}.
\begin{theorem}
Denote by $\bar I=\ep\bar\psi(\X)Ye^{-\beta_0 A}A\{A-\bar m(\X)\}$ and $\bar\sigma^2=\bar I^{-2}\ep h^2(\D;\beta_0,\bar\eta)$. Under Assumptions \ref{asu:1} and \ref{asu:hd:1}--\ref{asu:hd:5}, we have
\[
\sqrt{n}\bar\sigma^{-1}(\widehat\beta\subhd-\beta_0)=\frac{1}{\sqrt{n}}\sum_{i=1}^n(\bar\sigma\bar I)^{-1}h(\D_i;\beta_0,\bar\eta)+o_{\Pbb}(1),
\]
which weakly converge to ${\rm N}(0,1)$.
\label{thm:1}
\end{theorem}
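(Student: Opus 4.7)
The plan is to linearize the defining equation for $\widehat\beta\subhd$ around $(\beta_0,\bar\eta)$ and to control the remainder through a Taylor decomposition that separates the sampling noise, the first-order orthogonality bias, and a genuinely second-order remainder. First I would Taylor-expand the score equation (the second row of (\ref{equ:dant:r})) in $\beta$: for some $\tilde\beta$ between $\widehat\beta\subhd$ and $\beta_0$,
\[
0 \;=\; n^{-1}\sum_{i=1}^n h(\D_i;\widehat\beta\subhd,\widehat\eta) \;=\; n^{-1}\sum_{i=1}^n h(\D_i;\beta_0,\widehat\eta) + (\widehat\beta\subhd-\beta_0)\cdot n^{-1}\sum_{i=1}^n \partial_\beta h(\D_i;\tilde\beta,\widehat\eta).
\]
Because $\partial_\beta h(\D;\beta,\eta) = -\psi(\X)AYe^{-\beta A}\{A-m(\X)\}$ is smooth in $\beta$ on $\Bsc$ and continuous in $\eta$, the compactness statements in Assumption \ref{asu:1}, the uniform bound on $g(\X_i\trans\widehat\balpha)$ in Assumption \ref{asu:hd:4}, and the $\ell_1$-consistency of $\widetilde\bgamma$ yield $n^{-1}\sum_{i=1}^n\partial_\beta h(\D_i;\tilde\beta,\widehat\eta) = -\bar I + o_\Pbb(1)$ via a uniform law of large numbers combined with a preliminary consistency argument for $\widehat\beta\subhd$ itself.

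Next I would decompose
\[
n^{-1/2}\sum_{i=1}^n h(\D_i;\beta_0,\widehat\eta) \;=\; \underbrace{n^{-1/2}\sum_{i=1}^n h(\D_i;\beta_0,\bar\eta)}_{\text{(A)}} \;+\; \underbrace{n^{-1/2}\sum_{i=1}^n \partial_\eta h(\D_i;\beta_0,\bar\eta)[\widehat\eta-\bar\eta]}_{\text{(B)}} \;+\; \text{(C)},
\]
where (C) collects the second-order Taylor remainders. Term (A) converges to ${\rm N}(0,\ep h^2(\D;\beta_0,\bar\eta))$ by the ordinary CLT under Assumption \ref{asu:1}. Since $h(\D;\beta_0,\eta)$ is multilinear in $(\psi,e^r,m)$, (C) is a finite sum of cross-products of two nuisance deviations among $(\widehat\psi-\bar\psi)$, $(\widehat r-\bar r)$ and $(\widehat m-\bar m)$ multiplied by bounded factors. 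Applying Cauchy--Schwarz with the empirical mean-square rates in Assumption \ref{asu:hd:4} and linearizing $g$ by Assumption \ref{asu:hd:3}, each such cross-product is of order $O_\Pbb(s\log p/n)$; multiplying by $\sqrt{n}$ and invoking the ultra-sparsity Assumption \ref{asu:hd:5} gives $(\text{C}) = o_\Pbb(1)$.

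The main obstacle is term (B), which I would further split into the empirical-process piece $n^{-1/2}\sum_{i=1}^n\{\partial_\eta h(\D_i;\beta_0,\bar\eta) - \ep\partial_\eta h(\D;\beta_0,\bar\eta)\}[\widehat\eta-\bar\eta]$ and the population bias $\sqrt{n}\cdot\ep\partial_\eta h(\D;\beta_0,\bar\eta)[\widehat\eta-\bar\eta]$. The empirical-process piece has conditional variance of order $s\log p/n$ in sparse directions by Assumption \ref{asu:hd:4} and thus contributes $o_\Pbb(1)$ after suitable empirical-process bounds. The population bias is where Assumption \ref{asu:hd:1} and Remark \ref{rem:3.0} are decisive: under either branch of Assumption \ref{asu:hd:1}, two of the three pieces $\ep\partial_\psi h[\cdot]$, $\ep\partial_r h[\cdot]$, $\ep\partial_m h[\cdot]$ vanish exactly via the calculation in Remark \ref{rem:2.1}, and the surviving piece is recast as an $\ell_\infty\times\ell_1$ pairing between the dantzig residual from (\ref{equ:dant:m}) or (\ref{equ:dant:r}) and the nuisance error $\widehat\balpha-\bar\balpha$ or $\widehat\bgamma-\bar\bgamma$. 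Bounding the $\ell_\infty$ factor by $\lambda_\alpha$ or $\lambda_\gamma$ and the $\ell_1$ factor by Assumption \ref{asu:hd:4} yields $\sqrt{n}\cdot O_\Pbb(s\log p/n) = o_\Pbb(1)$ under Assumption \ref{asu:hd:5}. The delicate subtlety is that the dantzig constraints are imposed at $\widehat\psi$, $\widetilde\bgamma$ and $\widehat\balpha$ rather than at the limits $\bar\psi$, $\bar\bgamma$ and $\bar\balpha$, so a further linearization using Assumption \ref{asu:hd:3} and the design regularity in Assumption \ref{asu:1} is required to convert the sample-level orthogonality into population-level orthogonality at $\bar\eta$ while absorbing only lower-order remainders. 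Assembling everything gives $\widehat\beta\subhd-\beta_0 = \bar I^{-1}\cdot n^{-1}\sum_{i=1}^n h(\D_i;\beta_0,\bar\eta) + o_\Pbb(n^{-1/2})$, and multiplication by $\sqrt{n}/\bar\sigma$ produces the stated linearization; weak convergence to ${\rm N}(0,1)$ then follows by the CLT applied to (A).
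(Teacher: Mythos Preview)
Your approach is essentially correct and arrives at the same expansion, but it is organized differently from the paper's proof, and one step in your description is slightly inverted.

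The paper does not Taylor-expand in $\beta$ first. It keeps $\widehat\beta\subhd$ in the score throughout and bounds the approximation error $n^{-1}\sum_{i}\{h(\D_i;\widehat\beta\subhd,\widehat\eta)-h(\D_i;\widehat\beta\subhd,\bar\eta)\}$ directly via a telescoping into pieces $\Delta_1+\Delta_2+\Delta_3$ (one for each nuisance direction), each of which is further split into genuinely second-order cross-products (controlled by Cauchy--Schwarz and the empirical mean-square rates in Assumption~\ref{asu:hd:4}) plus a single first-order term ($\Delta_{15},\Delta_{24},\Delta_{33}$) of the form $(n^{-1}\sum_i W_i)\trans(\widehat\theta-\bar\theta)$. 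Having shown the total is $O_\Pbb(s\log p/n)=o_\Pbb(n^{-1/2})$, it concludes $n^{-1}\sum_i h(\D_i;\widehat\beta\subhd,\bar\eta)=o_\Pbb(n^{-1/2})$ and invokes Theorem~5.21 of \cite{van2000asymptotic} for the linearization and CLT. This avoids separately verifying $n^{-1}\sum_i\partial_\beta h(\D_i;\tilde\beta,\widehat\eta)\to-\bar I$, which your route needs explicitly; your route in turn makes the expansion more transparent.

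More importantly, the paper never splits the first-order piece into an empirical-process part and a population bias, and never invokes $\lambda_\alpha,\lambda_\gamma$ at this stage. The $\ell_\infty$ factors in $\Delta_{15},\Delta_{24},\Delta_{33}$ are bounded directly by Assumption~\ref{asu:hd:2}, which is a \emph{sample} concentration statement evaluated at the \emph{limiting} parameters $(\bgamma^*,\bar\bgamma,\bar\balpha)$; the dantzig construction enters only indirectly through Remark~\ref{rem:4.1.1}, which ensures those quantities have mean zero. In your decomposition the picture is actually cleaner than you describe: under either branch of Assumption~\ref{asu:hd:1}, \emph{all three} population Gateaux derivatives $\ep\partial_\psi h$, $\ep\partial_r h$, $\ep\partial_m h$ vanish exactly at $\bar\eta$ (two by Remark~\ref{rem:2.1}, the third because $\bar\balpha$ and $\bar\bgamma$ are by definition the population solutions of the moment conditions underlying (\ref{equ:dant:m}) and (\ref{equ:dant:r}); see Remark~\ref{rem:3.0}). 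Hence your ``population bias'' is identically zero and does not need the $\lambda_\alpha\times\|\widehat\bgamma-\bar\bgamma\|_1$ bound. The $\ell_\infty\times\ell_1$ argument is needed instead on your empirical-process piece, where it follows from Assumption~\ref{asu:hd:2} and the $\ell_1$-rates in Assumption~\ref{asu:hd:4}; phrasing this as a ``conditional variance'' bound is imprecise since $\widehat\eta$ is built from the same data and no cross-fitting is used here.
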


Recently, logistic debiased LASSO \citep{van2014asymptotically,jankova2016confidence} has been criticized on that its sparse inverse information matrix condition is not explainable and justifiable, leading to a subpar performance theoretically and numerically \citep{xia2020revisit}. Interestingly, we find the model sparsity assumption of our method is more reasonable than debiased LASSO and present a simple comparison of these two approaches in Remark \ref{rem:4.1}.

\begin{remark}
Assume the logistic model $\Pbb (Y=1\mid A,\X)={\rm expit}\{\beta_0 A+\X\trans\bgamma_0\}$ is correctly specified. As is argued by \cite{xia2020revisit}, assuming the information matrix of the logistic model has an ultra-sparse{\footnote{Or approximately sparse (see recent work like \cite{belloni2018high,ma2020global,liu2020integrative}).}} inverse is crucial to ensure the desirable parametric rate of the debiased LASSO estimator for $\beta_0$. However, this assumption is not explainable or convincing for the common gaussian design with sparse precision matrix, due to the presence of logistic canonical link. In comparison, we require that $\ep (A\mid Y=0,\X=\x)=g(\X_i\trans\balpha_0)$ with $\|\balpha_0\|_0=o(n^{1/2}/\log p)$. This assumption has two advantages. First, it accommodates nonlinear link function $g(\cdot)$ and can be more reasonable for a categorical $A$. Second, it is imposed on a conditional model directly and thus more explainable. For example, consider a conditional gaussian model: $(A,\X\trans)\trans\mid \{Y=j\}\sim \mathcal{N}(\bmu_j,\boldsymbol{\Sigma})$ for $j=0,1$. Then we have $r_0(\x)=\x\trans\bgamma_0$ where $(\beta_0,\bgamma_0\trans)\trans=\boldsymbol{\Sigma}^{-1}(\bmu_1-\bmu_0)$ and $A\mid \X,Y=0$ follows a gaussian linear model with the coefficient $\balpha_0$ determined by $\boldsymbol{\Sigma}^{-1}$. Therefore, our sparsity assumptions on $\balpha_0$ and $\bgamma_0$ actually assumes the data generation parameters $\boldsymbol{\Sigma}^{-1}$ and $\bmu_1-\bmu_0$ to be sparse, which are more explainable and verifiable in practice.

\label{rem:4.1}
\end{remark}

\setcounter{assume}{0}
\renewcommand{\theassume}{ML\arabic{assume}}

\subsection{Machine learning (nonparametric) setting}\label{sec:thm:ml}

Define that $\|f(\cdot)\|_{Q,q}=:\|f(\U)\|_{Q,q}=:\{\int |f(u)|^qdQ(u)\}^{1/q}$ for any real number $q>0$, function $f(\cdot)$, random variables $\U$ and probability measure $Q$.  Let $P$ denote the probability measure of the observed $\D$. We assume that $K=\Theta(1)$ and introduce the following assumption.
\begin{assume}[Quality of the ML nuisance estimators]
\label{asu:4.2.2}
For each $k\in\{1,2,\ldots,K\}$, 
\begin{align*}
\sup_{\x\in\mathcal{X}}|\widehat r\supmfk(\x)-r_0(\x)|+|\widehat m\supmfk(\x)-m_0(\x)|&=o_{\Pbb}(1);\\
\|\widehat r\supmfk(\cdot)-r_0(\cdot)\|_{P,2}+\|\widehat m\supmfk(\cdot)-m_0(\cdot)\|_{P,2}&=o_{\Pbb}(n^{-1/4}).
\end{align*}
\end{assume}

\begin{remark}
\label{rem:4.2.1}
Similar to Assumptions 3.2 and 3.4 of \cite{chernozhukov2016double}, our Assumption \ref{asu:4.2.2} requires that the ML estimators for $r_0(\cdot)$ and $m_0(\cdot)$ are uniformly consistent and their mean squared errors (MSE) achieve the rate $o_{\Pbb}(n^{-1/4})$. This assumption is also referred as rate doubly robust property \citep{smucler2019unifying} in that it requires production of the MSEs of $\widehat r\supmfk(\cdot)$ and $\widehat m\supmfk(\cdot)$ to be $o_{\Pbb}(n^{-1/2})$. In Appendix \ref{sec:app:fmr}, we provide justification for our proposed FMR procedure to derive that the resulted $\widehat r\supmfk(\cdot)$ satisfies Assumption \ref{asu:4.2.2} as long as the learning algorithm $\Lscr$ satisfies the same strong convergence properties as Assumption \ref{asu:4.2.2}, on all the learning tasks in FMR. Thus, the use of FMR procedure does not actually clip the wings of the ML algorithm being used in our framework.
\end{remark}

We present the asymptotic property of $\widehat\beta\subml$ in Theorem \ref{thm:2} with its proof found in Appendix \ref{sec:app:thm2}.
\begin{theorem}
Denote by $I_0=\ep\bar\psi(\X)Ye^{-\beta_0 A}A\{A-m_0(\X)\}$ and $\sigma_0^2=I_0^{-2}\ep h^2(\D;\beta_0,\eta_0)$. Under Assumptions \ref{asu:1} and \ref{asu:4.2.2}, we have
\[
\sqrt{n}\sigma_0^{-1}(\widehat\beta\subml-\beta_0)=\frac{1}{\sqrt{n}}\sum_{i=1}^n(\sigma_0I_0)^{-1}h(\D_i;\beta_0,\eta_0)+o_{\Pbb}(1),
\]
which weakly converge to ${\rm N}(0,1)$.
\label{thm:2}
\end{theorem}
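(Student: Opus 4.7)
The plan is to follow the standard DML template of \cite{chernozhukov2016double} adapted to the logistic partially linear score. Setting $\Psi_n(\beta,\eta')=n^{-1}\sum_{k=1}^K\sum_{i\in\Isc_k}h(\D_i;\beta,\eta'\supmfk)$, a first-order Taylor expansion in $\beta$ of the defining equation $\Psi_n(\widehat\beta\subml,\widehat\eta)=0$ around $\beta_0$ yields
\[
\widehat\beta\subml-\beta_0=-\frac{\Psi_n(\beta_0,\widehat\eta)}{\partial_\beta\Psi_n(\tilde\beta,\widehat\eta)}
\]
for some $\tilde\beta$ between $\widehat\beta\subml$ and $\beta_0$. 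I first establish consistency $\widehat\beta\subml\to\beta_0$ in probability by standard $M$-estimation arguments: compactness of $\Bsc$ from Assumption~\ref{asu:1}, a uniform-in-$\beta$ LLN for $\Psi_n(\beta,\widehat\eta)-\ep h(\D;\beta,\eta_0)$ (easy given the uniform consistency of $\widehat\eta\supmfk$ in Assumption~\ref{asu:4.2.2}), and identifiability of $\beta_0$ as the unique zero of $\ep h(\D;\beta,\eta_0)$ on $\Bsc$ (which follows from $\ep\partial_\beta h(\D;\beta_0,\eta_0)=-I_0=\Theta(1)$). The same uniform LLN plus continuity then gives $\partial_\beta\Psi_n(\tilde\beta,\widehat\eta)\to -I_0$ in probability. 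The theorem thus reduces to the linearization $\sqrt n\,\Psi_n(\beta_0,\widehat\eta)=n^{-1/2}\sum_ih(\D_i;\beta_0,\eta_0)+o_{\Pbb}(1)$, after which the classical CLT and Slutsky finish.

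For the key linearization I work fold-by-fold, conditioning on $\Isc_{\text{-}k}$ so that $\widehat\eta\supmfk$ is deterministic and the summands over $i\in\Isc_k$ are i.i.d. I decompose
\[
h(\D_i;\beta_0,\widehat\eta\supmfk)-h(\D_i;\beta_0,\eta_0)=\partial_\eta h(\D_i;\beta_0,\eta_0)[\widehat\eta\supmfk-\eta_0]+R_i^{(k)},
\]
using the explicit expression for $\partial_\eta h$ in (\ref{equ:derive}). The conditional expectation of the linear piece equals $\ep\partial_\eta h(\D;\beta_0,\eta_0)[\widehat\eta\supmfk-\eta_0]$, which vanishes by the Neyman orthogonality established in Section~\ref{sec:moti} (evaluated at $\bar\eta=\eta_0$, each of the $\psi$-, $r$-, and $m$-partial derivatives integrates to zero). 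The mean-zero empirical-process fluctuation of the linear piece has conditional variance of order $\|\widehat\eta\supmfk-\eta_0\|_{P,2}^2/|\Isc_k|$, which by Assumption~\ref{asu:4.2.2} is $o_{\Pbb}(n^{-3/2})$, so a conditional Chebyshev bound yields an $o_{\Pbb}(n^{-1/2})$ contribution to $\Psi_n$ on each fold, and summing over the $K=\Theta(1)$ folds preserves this rate.

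For the second-order remainder, the product structure of $h$ means $R_i^{(k)}$ is a sum of terms each containing either a product of two distinct error pieces (such as $(\widehat r-r_0)(\widehat m-m_0)$ or $(\widehat\psi-\bar\psi)(\widehat m-m_0)$) or a squared single-piece error (such as $(\widehat r-r_0)^2$ coming from the Taylor remainder of $e^{r(\X)}$). By Cauchy--Schwarz the conditional expectation of each such term is bounded by the corresponding product of $\|\cdot\|_{P,2}$ norms, hence $o_{\Pbb}(n^{-1/2})$ under Assumption~\ref{asu:4.2.2}. The dependence of $\widehat\psi\supmfk={\rm expit}\{-\widehat r\supmfk\}$ on $\widehat r\supmfk$ is benign because expit is bounded and $1$-Lipschitz, so $\|\widehat\psi\supmfk-\bar\psi\|_{P,2}\le\|\widehat r\supmfk-r_0\|_{P,2}=o_{\Pbb}(n^{-1/4})$, and the remainders involving $\widehat\psi$ are handled by the same Cauchy--Schwarz bookkeeping.

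The main obstacle I anticipate is that the score contains non-globally-Lipschitz pieces such as $e^{r(\X)}$ and $e^{-\beta A}$, so bare $L^2$ control of $\widehat r-r_0$ does not by itself tame the Taylor remainders. The uniform-consistency clause of Assumption~\ref{asu:4.2.2}, combined with the compactness of $\mathcal{A}$ and $\Bsc$ from Assumption~\ref{asu:1}, is precisely what localizes $\widehat r\supmfk(\X)$ to a compact neighborhood of $r_0(\X)$ with probability tending to one; on that event the exponentials and their derivatives are uniformly bounded and mean-value inequalities reduce every second-order piece to the $L^2$-products above. Carrying out this localization cleanly, and propagating it through the $\widehat\psi\supmfk$-factor that itself depends on $\widehat r\supmfk$ without inflating the rate, is the main technical bookkeeping in the argument.
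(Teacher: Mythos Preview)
Your proposal is correct and amounts to a direct, self-contained DML expansion rather than the paper's route of verifying the high-level Assumptions~3.3--3.4 of \cite{chernozhukov2016double} and then invoking their Theorem~3.3. The paper packages the consistency step, the denominator convergence, and the second-order remainder control into those abstract conditions (Conditions~\ref{cond:b1}--\ref{cond:b2} in Appendix~\ref{sec:app:thm2}) and only checks them; you instead carry out the Taylor expansion, the orthogonality cancellation, and the Cauchy--Schwarz bookkeeping explicitly. Both arguments are standard and lead to the same conclusion.

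The one substantive difference is in how $\widehat\psi\supmfk$ is handled. You treat it symmetrically with $\widehat r\supmfk$ and $\widehat m\supmfk$, exploiting the Lipschitz bound $\|\widehat\psi\supmfk-\bar\psi\|_{P,2}\le\|\widehat r\supmfk-r_0\|_{P,2}=o_{\Pbb}(n^{-1/4})$ that comes for free from the specific choice $\psi={\rm expit}(-r)$. The paper instead deliberately modifies the Chernozhukov et al.\ conditions so that only \emph{uniform consistency} of $\widehat\psi\supmfk$ is used: since $\ep h(\D;\beta_0,\{r_0,m_0,\psi\})=0$ for every $\psi$, one can first expand around $\{r_0,m_0,\widehat\psi\supmfk\}$ and absorb the $\psi$-fluctuation by a separate concentration step. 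This buys the paper the generality of Remark~\ref{rem:4.2.2} (accommodating $\psi_{\rm opt}$ or other weighting schemes for which no $n^{-1/4}$ rate is available), whereas your argument, while cleaner for the stated choice of $\psi$, would need that extra decoupling step to cover the more general case.
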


\begin{remark}
\label{rem:4.2.2}
Since $\bar\psi(\x)={\rm expit}\{-r_0(\x)\}$ and $\widehat\psi\supmfk(\x)={\rm expit}\{-\widehat r\supmfk(\x)\}$ in our ML case, one could show that $\widehat\psi\supmfk(\x)$ achieves the same strong convergence and rate properties as $\widehat r\supmfk(\cdot)$ under Assumption \ref{asu:4.2.2}. While generally speaking, uniform consistency of $\widehat\psi\supmfk(\cdot)$ is sufficient for the desirable conclusion in Theorem \ref{thm:2} so our framework accommodates more flexible choices on $\psi(\x)$, for example, $\psi_{\rm opt}(\x)$ as introduced in Section \ref{sec:method:eff}. We demonstrate this point during the proof in Appendix \ref{sec:app:thm2}.
\end{remark}

\section{Simulation study}\label{sec:sim}

We conduct simulation studies for HD and ML settings separately in Sections \ref{sec:sim:hd} and \ref{sec:sim:ml}, to study the point and interval estimation performance of our method. 

\subsection{High dimensional (parametric) setting}\label{sec:sim:hd}

For the HD parametric setting, we design three data generation configurations introduced as follows to simulate different scenarios of model specification:

\begin{enumerate}

\item[(i)] First, generate $Y$ following $P(Y=1)=1/2$. Then generate $(A,\X\trans)\trans\mid \{Y=j\}\sim \mathcal{N}(\bmu_j,\boldsymbol{\Sigma})$ for $j=0,1$. Specification of $\bmu_j$ and $\boldsymbol{\Sigma}$ are presented in Appendix \ref{sec:app:sim:detail} such that $\beta_0=0.5$, $r_0(\X)=-0.22(X_1+X_2)+0.08(X_3+X_4)$, and $m_0(\X)=-0.1\dot{3}(X_1+X_2+X_3+X_4)$.

\item[(ii)] Generate $Y$ following $P(Y=1)=1/2$ and $(A,\X\trans)\trans\mid \{Y=j\}\sim \mathcal{N}(\bmu_j,\boldsymbol{\Sigma})$ for $j=0,1$. Specification of $\bmu_j$ and $\boldsymbol{\Sigma}_j$ are presented in Appendix \ref{sec:app:sim:detail} such that $\beta_0=0.5$, $r_0(\X)=-0.22(X_1+X_2)+0.08(X_3+X_4)-0.15(X_1X_2+X_1X_3+X_2X_3)$, and $m_0(\X)=-0.1\dot{3}(X_1+X_2+X_3+X_4)$.

\item[(iii)] First generate $\X\sim\mathcal{N}(\bzero,\boldsymbol{\Sigma})$ with $\boldsymbol{\Sigma}\in \mathbb{R}^{p\times p}$ given in Appendix \ref{sec:app:sim:detail}. Then we generate $A$ given $\X$ from a gaussian linear model with unit variance and conditional mean
$$ 
E(A\mid\X)= 0.15X_1+0.15X_2+0.15X_3+0.15X_4+0.075X_1X_2+0.075X_1X_3+0.075X_2X_3.
$$ 
Finally, generate $Y$ by $\Pbb (Y=1\mid A,\X)={\rm expit}(0.5A+0.25X_1+0.25X_2+0.1X_3+0.1X_4)$.
\end{enumerate}
We realize configurations (i)--(iii) with the sample size $n = 1000$, $1500$ or $2000$ separately and the dimension of $\X$ fixed as $p = 200$. Under all these settings, we specify the nuisance models as: $r(\X)=\X\trans\bgamma$ and $m(\X)=\X\trans\balpha$. Then both nuisance models are correctly specified under (i), only $m(\X)$ is correctly specified under (ii), and only $r(\X)$ is correct under (iii). Note that we could not extract the explicit form of $m_0(\X)$ under (iii) since $A$ is generated conditional on $\X$ without fixed $Y=0$. While we still expect the linear model $m(\X)=\X\trans\balpha$ is misspecified under (iii) as there is non-linear terms included in $E(A|\X)$. Implementing details of our HD approach are presented in Appendix \ref{sec:app:num}. Specifically, all the tuning parameters in $\ell_1$-regularized regression are selected using cross-validation among $[0.2(\log p/n)^{1/2},2(\log p/n)^{1/2}]$. We conducted each setting with $300$ repeated simulations.

Table \ref{tab:simu:HD} evaluates the performance of our estimator $\widehat\beta\subhd$ under all settings on its mean square error (MSE), absolute bias and coverage probability (CP) of the 95\% confidence interval (CI) estimated using Gaussian bootstrap multiplier. Under all the settings, our method outputs low root-MSE and bias respectively occupying at most $18\%$ and $7\%$ of the magnitude of the true $\beta_0(=0.5)$ when $n=1000$, and at most $12\%$ and $4\%$ of the $\beta_0$ when $n=2000$. As the sample size $n$ grows, one could see a trend of decaying on the MSEs and bias of our estimator. In addition, under all the settings, our interval estimation has proper CP locating in $\pm0.03$ range of the nominal level $0.95$. Thus, our HD estimator performs steadily well under different model specification scenarios as long as at least one nuisance models are correctly specified.

\begin{table}[!htbp] 
\centering 
\caption{\label{tab:simu:HD} Average mean square error (MSE), average absolute bias (Bias), and average coverage probability (CP) of 95\% CI of our HD estimator with the sample size set as $1000$, $1500$ and $2000$, under configurations (i)--(iii) described in Section \ref{sec:sim:hd}. Number of repetition for each setting is $300$.}
\begin{tabular}{cccccccccc} 
\\[-1.8ex]\hline 
\hline \\[-1.8ex] 
&\multicolumn{3}{c}{Configuration (i)} &\multicolumn{3}{c}{Configuration (ii)}&\multicolumn{3}{c}{Configuration (iii)}  \\ \cmidrule(r){2-4}\cmidrule(r){5-7}\cmidrule(r){8-10}
$n$&$ 1000$ &$1500$&$2000$&$1000$ &$1500$&$2000$& $1000$ &$1500$&$2000$\\
\hline \\[-1.8ex] 
MSE&$0.007$ & $0.006$ & $0.004$ & $0.007$ & $0.006$ & $0.004$ & $0.008$ & $0.004$ & $0.003$ \\ 
Bias&$0.024$ & $0.016$ & $0.017$ & $0.032$ & $0.014$ & $0.020$ & $0.034$ & $0.016$ & $0.012$ \\ 
CP&$0.95$ & $0.94$ & $0.93$ & $0.96$ & $0.92$ & $0.93$ & $0.93$ & $0.95$ & $0.95$ \\ 
\hline
\hline \\[-1.8ex] 
\end{tabular} 
\end{table}

\subsection{Machine learning (nonparametric) setting}\label{sec:sim:ml}

To study our proposed method under the ML setting, we let $\boldsymbol{\Sigma}\in R^{p \times p}$ with $\Sigma_{ii}=1$; $\Sigma_{ij}=0.2$ for $i \neq j$, and generate $\mathcal{N}(\boldsymbol{0},\boldsymbol{\Sigma})$ random vectors and truncate their entries by $(-2,2)$ to obtain $\X$. We then generate $A$ from the gaussian model given $\X$ with unit variance and conditional mean $a_0(\X)= \boldsymbol\zeta_a\trans f_a(\X)$ where $f_a(\x)$ is a non-linear basis function of $\x$ including various types of effects (interaction, indicator and trigonometric function, etc.), defined in Appendix \ref{sec:app:sim:detail} and $\boldsymbol\zeta_a$ represents its loading coefficients also given in  Appendix \ref{sec:app:sim:detail}. Finally, we set $\beta_0=1$, $r_0(\X)=\boldsymbol\zeta_r\trans f_r(\X)$ (see Appendix \ref{sec:app:sim:detail}), and generate $Y$ with $\Pbb (Y=1\mid A,\X)={\rm expit}\{\beta_0 A+ r_0(\X)\}$. We fix $p=20$ and set $n=1000$ or $n=2000$ separately as two settings.

To estimate the nuisance function $r_0(\x)$, we use the FMR procedure with its last step being (\ref{equ:est:r:1}). And the number of fold for cross-fitting is set as $K=5$. For choice of the learning algorithms $\Lscr$, we consider four ML methods and a hybrid method of the ML models introduced as follows.

\begin{enumerate}[(a)]
\item Gradient boosted machines (GBM): an ensemble approach of classification and regression tree (CART) using gradient boosting. Implemented by {\bf R}-package ``gbm" \citep{greenwell2020package}.

\item Random forest (RF): ensemble of CART with bagging. Implemented with {\bf R} package ``RandomForest" \citep{andy2002class}.

\item Support vector machine (SVM): specified with linear kernel and implemented using {\bf R} package ``e1071" \citep{dimitriadou2004r}.

\item Neural network (NN): single hidden layer neural network implemented with {\bf R} package ``nnet" \citep{ripley2016package}.

\item Best nuisance models (Best): similar to \cite{chernozhukov2016double}, we use a simple hybrid method choosing the ML estimator for each nuisance component with the best prediction performance evaluated by cross-validated sum-squared loss.

\end{enumerate}
All the above mentioned ML algorithms are popular in the field of ML and have been considered in the literature of DML \citep{chernozhukov2016double,cui2019bias}. Tuning parameters of the ML models including the number of trees of GBM and RF, the margin of SVM, and the number of units and the weight decay of NN, are selected using the resampling approach of {\bf R} package ``caret" \citep{kuhn2020package}.  We conducted each setting of $n$ again with $300$ repeated simulations.

Table \ref{tab:simu:ml} presents the resulted average MSE, bias and CP of 95\% CI of the estimator $\widehat\beta\subml$ obtained with the five ML modelling strategies for $n=1000$ and $n=2000$ separately. The five approaches have relatively consistent performance in terms of MSE, bias and CP under both settings, with the variation of their MSEs smaller than $0.002$. This demonstrates that performance of our framework is robust to the choice of ML algorithms. While to certain degree, NN has the best performance (with the lowest bias and MSE) when $n=1000$, and GBM and Best have the best performance when $n=2000$. Also, interval estimators of all the approaches achieve proper coverage rates.

\begin{table}[!htbp] 
\centering 
\caption{\label{tab:simu:ml} Average mean square error (MSE), average absolute bias (Bias), and average coverage probability (CP) of 95\% CI of our ML estimator with sample sizes set as $1000$ and $2000$, and the nuisance models estimated by the four ML algorithms and the ``Best" approach described in Section \ref{sec:sim:ml}. Number of repetition for each setting is $300$.}
\begin{tabular}{ccccccccccc} 
\\[-1.8ex]\hline 
\hline \\[-1.8ex] 
&\multicolumn{5}{c}{$n=1000$} &\multicolumn{5}{c}{$n=2000$}\\ \cmidrule(r){2-6} \cmidrule(r){7-11}
& GBM &RF & SVM & NN &Best &GBM &RF & SVM & NN &Best\\
\hline \\[-1.8ex] 
MSE & $0.013$ & $0.013$ & $0.012$ & $0.011$ & $0.013$ & $0.006$ & $0.007$ & $0.007$ & $0.007$ & $0.006$\\
Bias & $0.036$ & $0.048$ & $0.045$ & $0.015$ & $0.037$ & $0.035$ & $0.049$ & $0.046$ & $0.041$ & $0.035$\\
CP & $0.93$ & $0.94$ & $0.92$ & $0.95$ & $0.93$ & $0.94$ & $0.94$ & $0.92$ & $0.94$ & $0.94$\\
\hline
\hline \\[-1.8ex] 
\end{tabular} 
\end{table}

\section{A real example: effect of EC pill on early gestation foetal}\label{sec:real}

In this section, we implement our proposed HD and ML methods to study the effect of emergency contraceptive (EC) pill on the rate of new birth and early gestation foetal death (abortion), by revisiting and exploring the data of a quasi-experimental study based on the policy reform of EC pill's legislation in Chile \citep{bentancor2017assessing}. In their original study, the authors collected all records of birth and foetal death in Chile, as well as a number of municipality level features (education, salary and healthcare, etc) of woman at reproductive age (15--34), in the years around 2008, during which the country was experiencing a reform on the legislation of EC pills. As the consequence of this reform, there is about half of the municipalities in Chile started to provide EC pill freely in 2009, while in the remaining half, EC pill is still not available or restricted in use during that period. This policy was mostly dependent on the political, economic and public health factors characterized by totally $16$ features (denoted as $\boldsymbol{Z}$) such as education spending, public health spending, condom use, and political conservativeness. Thus, the treatment of EC pill ($A=1$ for EC pill accessible; $A=0$ for EC pill not accessible) can be regarded as exogenous for the individuals. 

Let $Y\supone$ denote the indicator for the status of early gestation foetal death in each individual record and $Y\suptwo$ indicate giving new birth (pregnant and did not incur foetal death). Assume that
\begin{align*}
&\Pbb (Y\supone=1\mid A,\Z)={\rm expit}\{\beta_0\supone A+r_0\supone(\Z)\};\\
&\Pbb (Y\suptwo=1\mid A,\Z)={\rm expit}\{\beta_0\suptwo A+r_0\suptwo(\Z)\},
\end{align*}
where $r_0\supone(\cdot)$ and $r_0\suptwo(\cdot)$ are two unknown functions. We are interested in inferring the two parameters $\beta_0\supone$ and $\beta_0\suptwo$ characterizing the log odd ratios (log-OR) of abortion (among the pregnant individuals) and birth (among all individuals) to the treatment of EC pill respectively. To investigate $\beta_0\supone$, we follow a similar strategy as \cite{bentancor2017assessing} that focuses on the individual records at age between 15 and 25, on which early gestation foetal death can be viewed as a reasonable proxy for illegal abortion. Note that the prevalence of $Y\supone$ and $Y\suptwo$ in their corresponding populations are both less than $5\%$, which could cause logistics model unstable to fit. So we randomly downsample the $0$'s in both analysis to make the prevalence of $Y\supone$ and $Y\suptwo$ being $1/4$. This procedure only changes intercepts of the logistic models and does not affect the target parameters. The resulted data set for analyzing $\beta_0\supone$ (abortion) has $n\supone=5,824$ samples. While we only take a subset with $n\suptwo=10,000$ samples for $\beta_0\suptwo$ so that our algorithms will not require excessive computation times.

For our HD approach, we let $\X$ be the $p=175$ dimensional bases joining $\Z$, all the interaction terms of $\Z$ and the three-dimensional natural splines of all the continuous variable in $\Z$. And we specify the nuisance functions as $m(\x)={\rm expit}(\x\trans\balpha^{\scriptscriptstyle (\ell)})$, and $r(\x)=\x\trans\bgamma^{\scriptscriptstyle (\ell)}$ for $j=1,2$. We take $\X=\Z$ as the input covariates of our ML method. The choice and implementation of the ML algorithms are basically the same as in Section \ref{sec:sim:ml}, except that we additionally introduce dropout\footnote{A common and flexible technique in ML research used for regularization and avoiding over-fitting. Here we randomly and independently set each entry of the training covariates matrix as ${\rm N}(0,1)$ variable with probability $0.4$.} for the two tree-based ML algorithms, GBM and RF, to avoid their over-fitting due to that most covariates are of municipality level but the records are of individual level.

Tables \ref{tab:real:death} and \ref{tab:real:birth} present the point estimation, 95\% CI estimation and $p$-values of our HD and ML approaches for $\beta\supone_0$ and $\beta\suptwo_0$, respectively. For $\beta\supone_0$, log-OR of early gestation foetal death to EC pill, point estimations of all methods are positive and around $-0.175\pm0.03$. Their interval estimations are also internally consistent except that SVM outputs a relatively narrow CI and NN includes $0$ near its CI lower bound. And all methods reject the null: $\beta\supone_0=0$ at level $0.05$. This is because NN outputs slightly worse prediction model for $A\mid \X,Y=0$. Note that the result of our hybrid method ``Best" is consistent enough with HD, indicating that our methods under both settings lead to basically the same conclusion. Similar situation occurs to the estimation of $\beta\suptwo_0$ as well. For $\beta\suptwo_0$, all methods reject the null at level $0.05$ and their point and CI estimations are internally consistent (SVM shows a moderate variation from other methods).

Our analysis results reveal that distribution of EC pill could significantly reduce the rate of illegal abortion (in the age group 15--25) and new birth. This is consistent with the results of \cite{bentancor2017assessing} obtained through their municipality level analysis. Although the estimated effect sizes are of different scales\footnote{Their effect is defined in a partially linear model of the abortion/birth rate against the treatment and control variables. While we are measuring the effect of EC pill in a logistic model at the individual level.} and thus incomparable across the two studies, our $p$-values appear to show more significance in that nearly all of them are below $0.05$ while their estimated $p$-values are between $0.05$ and $0.1$. This is because that we use more complex and robust nuisance models to adjust for the confounding effects of $\Z$ and perform our analysis from the individual level granting us to have larger sample sizes.
 
\begin{table}[!htbp] 
\centering 
\caption{\label{tab:real:death} Point estimations, estimated 95\% CI lower/upper bounds (LB/UB) and (two-sided) $p$-values for $\beta\supone_0$ (log odds ratio of early gestation foetal death to the treatment of EC pill) of our HD and ML (with the five different realization described in Section \ref{sec:sim:ml}) approaches.}
\begin{tabular}{ccccccc} 
\\[-1.8ex]\hline 
\hline 
Method & HD & GBM & SVM &RF  & NN &  Best\\
\hline 
\hline
$\hat{\beta_0}$ & -0.175 & -0.207 & -0.144 & -0.181 & -0.154 & -0.181 \\
CI LB & -0.343 & -0.348 & -0.267 & -0.342 & -0.356 & -0.342 \\
CI UB & -0.004 & -0.067 & -0.020 & -0.020 & 0.047 & -0.020 \\
$p$-value & 0.046 & 0.004 & 0.023 & 0.028 & 0.133 & 0.028\\ 
\hline \\[-1.8ex] 
\end{tabular} 
\end{table}

\begin{table}[!htbp] 
\centering 
\caption{\label{tab:real:birth} Point estimations, estimated 95\% CI lower/upper bounds (LB/UB) and (two-sided) $p$-values for $\beta\suptwo_0$ (log odds ratio of new birth to the treatment of EC pill) of our HD and ML (with the five different realization described in Section \ref{sec:sim:ml}) approaches.}
\begin{tabular}{ccccccc} 
\\[-1.8ex]\hline 
\hline 
Method & HD & GBM & SVM &RF  & NN &  Best\\
\hline 
\hline
$\hat{\beta_0}$ & -0.181 & -0.132 & -0.099 & -0.150 & -0.131 & -0.150 \\
CI LB & -0.272 & -0.235 & -0.194 & -0.279 & -0.255 & -0.279 \\
CI UB & -0.091 & -0.030 & -0.005 & -0.021 & -0.008 & -0.021  \\
$p$-value & 0.000 & 0.012 & 0.039 & 0.023 & 0.037 & 0.023\\ 
\hline \\[-1.8ex] 
\end{tabular} 
\end{table}




\section*{Acknowledgements}

The authors thanks their advisor, Tianxi Cai for helpful discussion and comments on this paper.



\bibliographystyle{apalike}
\bibliography{library}

\newtheorem{thmx}{Theorem}
\renewcommand{\thethmx}{\Alph{thmx}}
\setcounter{lemma}{0}
\renewcommand{\thelemma}{A\arabic{lemma}}
\setcounter{equation}{0}
\renewcommand{\theequation}{A\arabic{equation}}
\renewcommand{\thecondition}{A\arabic{condition}}

\clearpage
\newpage
\setcounter{page}{1}
\appendix

\section*{Appendix}

\section{Proof of Theorem \ref{thm:1}}\label{sec:app:thm}

\begin{proof}
By (\ref{equ:dant:r}), we have
\[
n^{-1}\sum_{i=1}^nh(\D_i;\widehat\beta\subhd,\widehat\eta)=n^{-1}\sum_{i=1}^n\widehat\psi(\X_i)\{Y_ie^{-\widehat\beta\subhd A_i}-(1-Y_i)e^{\X_i\trans\widehat\bgamma}\}\{A_i-g(\X_i\trans\widehat\balpha)\}=0.
\]
Our main involvement is to remove the approximation error $n^{-1}\sum_{i=1}^nh(\D_i;\widehat\beta\subhd,\widehat\eta)-h(\D_i;\widehat\beta\subhd,\bar\eta)$, asymptotically. Note that
\begin{align*}
&n^{-1}\sum_{i=1}^nh(\D_i;\widehat\beta\subhd,\widehat\eta)-h(\D_i;\widehat\beta\subhd,\bar\eta)\\
=&n^{-1}\sum_{i=1}^n\widehat\psi(\X_i)(1-Y_i)\{e^{\X_i\trans\bar\bgamma}-e^{\X_i\trans\widehat\bgamma}\}\{A_i-g(\X_i\trans\widehat\balpha)\}\\
&+n^{-1}\sum_{i=1}^n\widehat\psi(\X_i)\{Y_ie^{-\widehat\beta\subhd A_i}-(1-Y_i)e^{\X_i\trans\bar\bgamma}\}\{g(\X_i\trans\bar\balpha)-g(\X_i\trans\widehat\balpha)\}\\
&+n^{-1}\sum_{i=1}^n\{{\rm expit}(-\X_i\trans\widetilde\bgamma)-{\rm expit}(-\X_i\trans\bgamma^*)\}\{Y_ie^{-\widehat\beta\subhd A_i}-(1-Y_i)e^{\X_i\trans\bar\bgamma}\}\{A_i-g(\X_i\trans\bar\balpha)\}\\
=:&\Delta_1+\Delta_2+\Delta_3.
\end{align*}
We handle the terms $\Delta_1$, $\Delta_2$ and $\Delta_3$ separately as follows. First, we have
\begin{align*}
\Delta_1=&n^{-1}\sum_{i=1}^n(1-Y_i)\{\widehat\psi(\X_i)-\bar\psi(\X_i)\}e^{\X_i\trans\bar\bgamma}\{1-e^{\X_i\trans(\widehat\bgamma-\bar\bgamma)}\}\{A_i-g(\X_i\trans\bar\balpha)\}\\
&+n^{-1}\sum_{i=1}^n(1-Y_i)\widehat\psi(\X_i)e^{\X_i\trans\bar\bgamma}\{1-e^{\X_i\trans(\widehat\bgamma-\bar\bgamma)}\}\{g(\X_i\trans\bar\balpha)-g(\X_i\trans\widehat\balpha)\}\\
&+n^{-1}\sum_{i=1}^n(1-Y_i)\bar\psi(\X_i)e^{\X_i\trans\bar\bgamma}\{1-e^{\X_i\trans(\widehat\bgamma-\bar\bgamma)}-\X_i\trans(\widehat\bgamma-\bar\bgamma)\}\{A_i-g(\X_i\trans\widehat\balpha)\}\\
&+n^{-1}\sum_{i=1}^n(1-Y_i)\bar\psi(\X_i)e^{\X_i\trans\bar\bgamma}\{g(\X_i\trans\bar\balpha)-g(\X_i\trans\widehat\balpha)\}\X_i\trans(\widehat\bgamma-\bar\bgamma)\\
&+n^{-1}\sum_{i=1}^n(1-Y_i)\bar\psi(\X_i)e^{\X_i\trans\bar\bgamma}\{A_i-g(\X_i\trans\bar\balpha)\}\X_i\trans(\widehat\bgamma-\bar\bgamma)\\
=:&\Delta_{11}+\Delta_{12}+\Delta_{13}+\Delta_{14}+\Delta_{15}.
\end{align*}
As $\sup_{i\in\{1,\ldots,n\}}|\X_i\trans(\widehat\bgamma-\bar\bgamma)|=O_{\Pbb}(1)$ by Assumption \ref{asu:hd:4}, there exists $M_1=O(1)$ such that with probability approaching $1$,
\begin{equation}
\label{equ:app:a.1}
|1-e^{\X_i\trans(\widehat\bgamma-\bar\bgamma)}|\leq M_1|\X_i\trans(\widehat\bgamma-\bar\bgamma)|,\quad|1-e^{\X_i\trans(\widehat\bgamma-\bar\bgamma)}-\X_i\trans(\widehat\bgamma-\bar\bgamma)|\leq M_1\{\X_i\trans(\widehat\bgamma-\bar\bgamma)\}^2;
\end{equation}
\begin{equation}
\label{equ:app:a.2}
|\widehat\psi(\X_i)-\bar\psi(\X_i)|=\frac{e^{\X_i\trans\bgamma*}|1-e^{\X_i\trans(\widetilde\bgamma-\bgamma*)}|}{(1+e^{\X_i\trans\bgamma*})(1+e^{\X_i\trans\widetilde\bgamma})}\leq M_1|\X_i\trans(\widetilde\bgamma-\bgamma^*)|.
\end{equation}
And by Assumptions \ref{asu:1} and \ref{asu:hd:4}, there exists $M_2=\Theta(1)$ that $\sup_{i\in\{1,\ldots,n\}}|A_i-g(\X_i\trans\bar\balpha)|+|A_i-g(\X_i\trans\widehat\balpha)|\leq M_2$. Consequently, by Assumptions \ref{asu:hd:4} and boundness of $\psi(\cdot)$, we have
\begin{align*}
|\Delta_{11}|\leq& n^{-1}\sum_{i=1}^n M_1^2M_2e^{\X_i\trans\bar\bgamma}|\X_i\trans(\widetilde\bgamma-\bgamma^*)||\X_i\trans(\widehat\bgamma-\bar\bgamma)|\\
\leq&M_1^2M_2\left[n^{-1}\sum_{i=1}^ne^{\X_i\trans\bar\bgamma}\{\X_i\trans(\widetilde\bgamma-\bgamma^*)\}^2\cdot n^{-1}\sum_{i=1}^ne^{\X_i\trans\bar\bgamma}\{\X_i\trans(\widehat\bgamma-\bar\bgamma)\}^2\right]^{1/2}=O_{\Pbb}\left(\frac{s\log p}{n}\right);\\
|\Delta_{12}|\leq&n^{-1}\sum_{i=1}^nM_1e^{\X_i\trans\bar\bgamma}|\X_i\trans(\widehat\bgamma-\bar\bgamma)||g(\X_i\trans\bar\balpha)-g(\X_i\trans\widehat\balpha)|\\
\leq&M_1\left[n^{-1}\sum_{i=1}^ne^{\X_i\trans\bar\bgamma}\{\X_i\trans(\widehat\bgamma-\bar\bgamma)\}^2\cdot n^{-1}\sum_{i=1}^ne^{\X_i\trans\bar\bgamma}\{g(\X_i\trans\bar\balpha)-g(\X_i\trans\widehat\balpha)\}^2\right]^{1/2}=O_{\Pbb}\left(\frac{s\log p}{n}\right);\\
|\Delta_{13}|\leq&n^{-1}\sum_{i=1}^nM_1M_2e^{\X_i\trans\bar\bgamma}\{\X_i\trans(\widehat\bgamma-\bar\bgamma)\}^2=O_{\Pbb}\left(\frac{s\log p}{n}\right);\\
|\Delta_{14}|\leq&n^{-1}\sum_{i=1}^ne^{\X_i\trans\bar\bgamma}|\X_i\trans(\widehat\bgamma-\bar\bgamma)||g(\X_i\trans\bar\balpha)-g(\X_i\trans\widehat\balpha)|=O_{\Pbb}\left(\frac{s\log p}{n}\right),~\mbox{similar to }|\Delta_{12}|.
\end{align*}
By Assumptions \ref{asu:hd:2} and \ref{asu:hd:4},
\[
|\Delta_{15}|\leq \left\|n^{-1}\sum_{i=1}^n(1-Y_i)\bar\psi(\X_i)e^{\X_i\trans\bar\bgamma}\{A_i-g(\X_i\trans\bar\balpha)\}\X_i\right\|_{\infty}\cdot\|\widehat\bgamma-\bar\bgamma\|_1=O_{\Pbb}\left(\frac{s\log p}{n}\right).
\]
Thus, we have $|\Delta_1|=O_{\Pbb}(s\log p/n)$. For $\Delta_2$, we have 
\begin{align*}
\Delta_2=&n^{-1}\sum_{i=1}^n\widehat\psi(\X_i)Y_i(e^{-\widehat\beta\subhd A_i}-e^{-\beta_0A_i})\{g(\X_i\trans\bar\balpha)-g(\X_i\trans\widehat\balpha)\}\\
&+n^{-1}\sum_{i=1}^n\{\widehat\psi(\X_i)-\bar\psi(\X_i)\}\{Y_ie^{-\beta_0 A_i}-(1-Y_i)e^{\X_i\trans\bar\bgamma}\}\{g(\X_i\trans\bar\balpha)-g(\X_i\trans\widehat\balpha)\}\\
&+n^{-1}\sum_{i=1}^n\bar\psi(\X_i)\{Y_ie^{-\beta_0 A_i}-(1-Y_i)e^{\X_i\trans\bar\bgamma}\}\{g(\X_i\trans\bar\balpha)-g(\X_i\trans\widehat\balpha)-g'(\X_i\trans\bar\balpha)\X_i\trans(\widehat\balpha-\bar\balpha)\}\\
&+n^{-1}\sum_{i=1}^n\bar\psi(\X_i)\{Y_ie^{-\beta_0 A_i}-(1-Y_i)e^{\X_i\trans\bar\bgamma}\}g'(\X_i\trans\bar\balpha)\X_i\trans(\widehat\balpha-\bar\balpha)\\
=:&\Delta_{21}+\Delta_{22}+\Delta_{23}+\Delta_{24}.
\end{align*}
Again using Assumptions \ref{asu:1} and \ref{asu:hd:4}, there exists $M_3=\Theta(1)$ such that 
\begin{equation}
\label{equ:app:a.3}
|e^{-\widehat\beta\subhd A_i}-e^{-\beta_0A_i}|\leq M_3|\widehat\beta\subhd-\beta_0|.
\end{equation}
And by Assumption \ref{asu:hd:3} and the mean value theorem, for each $i$, there exits $t_i$ lying between $\X_i\trans\widehat\balpha$ and $\X_i\trans\bar\balpha$ such that
\begin{equation}
\label{equ:app:a.4}
|g(\X_i\trans\bar\balpha)-g(\X_i\trans\widehat\balpha)-g'(\X_i\trans\bar\balpha)\X_i\trans(\widehat\balpha-\bar\balpha)|\leq|g'(\X_i\trans\bar\balpha)-g'(t_i)||\X_i\trans(\widehat\balpha-\bar\balpha)|\leq L\{\X_i\trans(\widehat\balpha-\bar\balpha)\}^2.
\end{equation}
These combined with (\ref{equ:app:a.1}), (\ref{equ:app:a.2}) and Assumptions \ref{asu:1} and \ref{asu:hd:4} lead to that
\begin{align*}
|\Delta_{21}|&=O\left(|\widehat\beta\subhd-\beta_0|\left[n^{-1}\sum_{i=1}^n\{g(\X_i\trans\bar\balpha)-g(\X_i\trans\widehat\balpha)\}^2\right]^{1/2}\right)=O_{\Pbb}\left(\frac{s\log p}{n}\right);\\
|\Delta_{22}|&=O\left(\left[n^{-1}\sum_{i=1}^n\{1+e^{\X_i\bar\bgamma}\}\{\widehat\psi(\X_i)-\bar\psi(\X_i)\}^2\cdot n^{-1}\sum_{i=1}^n\{1+e^{\X_i\bar\bgamma}\}\{g(\X_i\trans\bar\balpha)-g(\X_i\trans\widehat\balpha)\}^2\right]^{1/2}\right)\\
&=O_{\Pbb}\left(\frac{s\log p}{n}\right);\\
|\Delta_{23}|&=O\left(n^{-1}\sum_{i=1}^n\{1+e^{\X_i\bar\bgamma}\}\{\X_i\trans(\widehat\balpha-\bar\balpha)\}^2\right)=O_{\Pbb}\left(\frac{s\log p}{n}\right).
\end{align*}
And by Assumptions \ref{asu:hd:2} and \ref{asu:hd:4}, 
\[
|\Delta_{24}|\leq \left\|n^{-1}\sum_{i=1}^n\bar\psi(\X_i)\{Y_ie^{-\beta_0 A_i}-(1-Y_i)e^{\X_i\trans\bar\bgamma}\}g'(\X_i\trans\bar\balpha)\X_i\right\|_{\infty}\cdot\|\widehat\balpha-\bar\balpha
\|_1=O_{\Pbb}\left(\frac{s\log p}{n}\right).
\]
So we also have $|\Delta_2|=O_{\Pbb}(s\log p/n)$. For $\Delta_3$, we have
\begin{align*}
\Delta_3=&n^{-1}\sum_{i=1}^n\{{\rm expit}(-\X_i\trans\widetilde\bgamma)-{\rm expit}(-\X_i\trans\bgamma^*)\}Y_i(e^{-\widehat\beta\subhd A_i}-e^{-\beta_0 A_i})\{A_i-g(\X_i\trans\bar\balpha)\}\\  
&+n^{-1}\sum_{i=1}^n\Big[\{{\rm expit}(-\X_i\trans\widetilde\bgamma)-{\rm expit}(-\X_i\trans\bgamma^*)-{\rm expit}'(-\X_i\trans\bgamma^*)\X_i\trans(\bgamma^*-\widetilde\bgamma)\}\\
&\quad\quad\quad\quad~\cdot\{Y_ie^{-\beta_0 A_i}-(1-Y_i)e^{\X_i\trans\bar\bgamma}\}\{A_i-g(\X_i\trans\bar\balpha)\}\Big]\\
&+n^{-1}\sum_{i=1}^n{\rm expit}'(-\X_i\trans\bgamma^*)\X_i\trans(\bgamma^*-\widetilde\bgamma)\{Y_ie^{-\beta_0 A_i}-(1-Y_i)e^{\X_i\trans\bar\bgamma}\}\{A_i-g(\X_i\trans\bar\balpha)\}\\
=:&\Delta_{31}+\Delta_{32}+\Delta_{33}.
\end{align*}
Again using the mean value theorem and the fact that $|{\rm expit}'(u)-{\rm expit}'(v)|\leq|u-v|$ for any $u,v\in\mathbb{R}$ (by $|{\rm expit}''(\cdot)|\leq 1$), we have
\[
|{\rm expit}(-\X_i\trans\widetilde\bgamma)-{\rm expit}(-\X_i\trans\bgamma^*)-{\rm expit}'(-\X_i\trans\bgamma^*)\X_i\trans(\bgamma^*-\widetilde\bgamma)|\leq\{\X_i\trans(\bgamma^*-\widetilde\bgamma)\}^2.
\]
Then by (\ref{equ:app:a.2}), (\ref{equ:app:a.3}), and Assumptions \ref{asu:1} and \ref{asu:hd:4}, we have
\begin{align*}
|\Delta_{31}|=&O\left(|\widehat\beta\subhd-\beta_0|\left[n^{-1}\sum_{i=1}^n\{\X_i\trans(\widetilde\bgamma-\bgamma^*)\}^2\right]^{1/2}\right)=O_{\Pbb}\left(\frac{s\log p}{n}\right);\\
|\Delta_{32}|=&O\left(n^{-1}\sum_{i=1}^n\{1+e^{\X_i\bar\bgamma}\}\{\X_i\trans(\widetilde\bgamma-\bgamma^*)\}^2\right)=O_{\Pbb}\left(\frac{s\log p}{n}\right).
\end{align*}
And by Assumptions \ref{asu:hd:2} and \ref{asu:hd:4}, 
\[
|\Delta_{33}|\leq\left\|n^{-1}\sum_{i=1}^n{\rm expit}'(-\X_i\trans\bgamma^*)\{Y_ie^{-\beta_0 A_i}-(1-Y_i)e^{\X_i\trans\bar\bgamma}\}\{A_i-g(\X_i\trans\bar\balpha)\}\X_i\right\|_{\infty}\|\bgamma^*-\widetilde\bgamma\|_1=O_{\Pbb}\left(\frac{s\log p}{n}\right).
\]
Thus, we have $|\Delta_3|=O_{\Pbb}(s\log p/n)$, and by Assumption \ref{asu:hd:5},
\[
n^{-1}\sum_{i=1}^nh(\D_i;\widehat\beta\subhd,\widehat\eta)-h(\D_i;\widehat\beta\subhd,\bar\eta)=O_{\Pbb}\left(\frac{s\log p}{n}\right)=o_{\Pbb}\left(\frac{1}{\sqrt{n}}\right),
\]
which leads to that
\[
n^{-1}\sum_{i=1}^nh(\D_i;\widehat\beta\subhd,\bar\eta)+o_{\Pbb}\left(\frac{1}{\sqrt{n}}\right)=0.
\]
This combined with Remark \ref{rem:2.1} that $\ep h(\D;\beta_0,\bar\eta)=0$ under Assumption \ref{asu:hd:1}, the regularity Assumption \ref{asu:1}, and Theorem 5.21 of \cite{van2000asymptotic}, finally finishes proving Theorem \ref{thm:1}.

\end{proof}

\newpage

\section{Proof of Theorem \ref{thm:2}}\label{sec:app:thm2}

Following the general results of the DML estimator with non-linear Neyman orthogonal score presented in Section 3.3 and Theorem 3.3 of \cite{chernozhukov2016double}, we only need to verify their Assumptions 3.3 and 3.4 on our orthogonal score function $h(\D;\beta,\eta)$, specifically, Conditions \ref{cond:b1} and \ref{cond:b2} presented as follows.

\begin{condition}[Moment condition with Neyman orthogonality]
\label{cond:b1}
It holds that: (a) $\ep h(\D;\beta_0,\eta_0)=0$ and $\Bsc$ contains an interval of length $\Theta(n^{-1/2}\log n)$ centred at $\beta_0$; (b) the map $(\beta,\eta)\rightarrow\ep h(\D;\beta_0,\eta_0)$ is twice continuously Gateaux-differentiable; (c) $|\ep h(\D;\beta,\eta_0)|\geq \min\{|J_0(\beta-\beta_0)|,c_0\}$ where the parameters $\eta_0=\{r_0(\cdot),m_0(\cdot),\bar\psi\}$, $c_0=\Theta(1)$ and $J_0=\partial_{\beta}\ep h(\D;\beta,\eta_0)|_{\beta=\beta_0}=\Theta(1)$; (d) $h(\D;\beta_0,\eta_0)$ obeys Neyman orthogonality, i.e. $\partial_{\eta}\ep h(\D;\beta_0,\eta_0)[\eta-\eta_0]=0$ for all $\eta\in\Esc$ where the parameter space of $\eta$: $\Esc\subseteq \{\eta:\ep |h(\D;\beta_0,\eta_0)[\eta-\eta_0]|<\infty\}$.
\end{condition}

\begin{condition}[Regularity of the score and quality of the nuisance estimators]
\label{cond:b2}
It holds that: (a) $\widehat\eta\supmfk$ belongs to the realization set $\Tsc_n$ for each $k\in\{1,2,\ldots,K\}$, with probability approaching 1 where $\Tsc_n$ satisfies $\eta_0\in\Tsc_n$ and conditions given as follows; (b) The space of $\beta$, $\Bsc$ is bounded and for each $\eta\in\Tsc_n$, the functional space $\Fsc_{\eta}=\{h(\cdot;\beta,\eta):\beta\in\Bsc\}$ is measurable and its uniform covering number satisfies: there exists positive constant $R=\Theta(1)$ and $\nu=\Theta(1)$ such that
\[
\sup_{Q}\log\Nsc(\epsilon\|F_{\eta}(\cdot)\|_{Q,2},\Fsc_{\eta},\|\cdot\|_{Q,2})\leq \nu\log(R/\epsilon),\quad\forall \epsilon\in(0,1],
\]
where $F_{\eta}(\cdot)$ is a measurable envelope function for $\Fsc_{\eta}$: $F_{\eta}(\D)\geq|h(\D;\beta,\eta)|$ for all $\D$ and $\beta\in\Bsc$, and there exists $q>2$ such that $\|F_{\eta}(\cdot)\|_{P,q}=O(1)$; (c) there exists sequence $\tau_n$:
\begin{align*}
&\sup_{\eta=\{r,m,\psi\}\in\Tsc_n,\beta\in\Bsc}|\ep h(\D;\beta,\eta)-\ep h(\D;\beta,\{r_0,m_0,\psi\})|=o(\tau_n),\\
&\sup_{\eta\in\Tsc_n,|\beta-\beta_0|\leq\tau_n}\ep[h(\D;\beta,\eta)-h(\D;\beta_0,\{r_0,m_0,\psi\})]^2+\ep[h(\D;\beta_0,\{r_0,m_0,\psi\})-h(\D;\beta_0,\eta_0)]^2=o(1),\\
&\sup_{r\in(0,1),\eta\in\Tsc_n,|\beta-\beta_0|\leq\tau_n}|\partial_r^2\ep h\{\D;\beta_0+r(\beta-\beta_0),\eta_0+r(\eta-\{r_0(\cdot),m_0(\cdot),\psi\})\}|=o(n^{-1/2});
\end{align*}
and (d) $\ep h^2(\D;\beta_0,\eta_0)=\Theta(1)$.
\end{condition}
We make some simplification and adaptation on the original assumptions in \cite{chernozhukov2016double} to form Conditions \ref{cond:b1}-\ref{cond:b2}, according to our own setting. The only non-trivial change made here is that in Condition \ref{cond:b2} (c), we require
\begin{equation}
\begin{split}
&\sup_{\eta=\{r,m,\psi\}\in\Tsc_n,\beta\in\Bsc}|\ep h(\D;\beta,\eta)-\ep h(\D;\beta,\{r_0,m_0,\psi\})|=o(\tau_n),\\
&\sup_{r\in(0,1),\eta\in\Tsc_n,|\beta-\beta_0|\leq\tau_n}|\partial_r^2\ep h\{\D;\beta_0+r(\beta-\beta_0),\eta_0+r(\eta-\{r_0(\cdot),m_0(\cdot),\psi\})\}|=o(n^{-1/2});
\end{split}    
\label{equ:app:score}
\end{equation}
instead of:
\begin{align*}
&\sup_{\eta\in\Tsc_n,\beta\in\Bsc}|\ep h(\D;\beta,\eta)-\ep h(\D;\beta,\eta_0)|=o(\tau_n),\\
&\sup_{r\in(0,1),\eta\in\Tsc_n,|\beta-\beta_0|\leq\tau_n}|\partial_r^2\ep h\{\D;\beta_0+r(\beta-\beta_0),\eta_0+r(\eta-\eta_0)\}|=o(n^{-1/2}),
\end{align*}
as used in Assumption 3.4 (c) of \cite{chernozhukov2016double}. The first inequality of (\ref{equ:app:score}) is used by \cite{chernozhukov2016double} to derive a preliminary rate for the DML estimator: $|\widehat\beta\subml-\beta_0|=o_{\Pbb}(\tau_n)$ (see their Step 1 of the proof of Lemma 6.3), and the second inequality of (\ref{equ:app:score}) is used in their Step 3 the proof of Lemma 6.3 to process the second order error of
\[
n^{-1}\sum_{k=1}^K\sum_{i\in\Isc_k} h(\D_i;\beta_0,\eta_0)-h(\D_i;\beta,\widehat\eta\supmfk),
\]
uniformly for $\beta$ satisfying $|\beta-\beta_0|\leq\tau_n$.

Note that our modified two assumptions are still sufficient for deriving these results. Since $\ep h(\D;\beta_0,\{r_0,m_0,\psi\})=0$ holds for all $\psi$ (see our Remark \ref{rem:2.1}), there is actually no need to consider $\ep h(\D;\beta,\{r_0,m_0,\psi\})-\ep h(\D;\beta,\{r_0,m_0,\psi_0\})$ when deriving $|\widehat\beta\subml-\beta_0|=o_{\Pbb}(\tau_n)$. While for the Step 3 of \cite{chernozhukov2016double}, one can instead handle the second order error of
\[
n^{-1}\sum_{k=1}^K\sum_{i\in\Isc_k} h(\D_i;\beta_0,\{r_0,m_0,\widehat\psi\supmfk\})-h(\D_i;\beta,\widehat\eta\supmfk),
\]
again using that $\ep h(\D;\beta_0,\{r_0,m_0,\psi\})=0$ holds for all $\psi$, and then remove the remaining error:
\[
n^{-1}\sum_{k=1}^K\sum_{i\in\Isc_k} h(\D_i;\beta_0,\{r_0,m_0,\psi_0\})-h(\D_i;\beta_0,\{r_0,m_0,\widehat\psi\supmfk\}),
\]
through concentration based on the fact $\partial_{\psi} \ep h(\D;\beta_0,\eta_0)[\psi-\psi_0]=0$ and the second inquality of Condition \ref{cond:b2} (c): $\sup_{\eta\in\Tsc_n,|\beta-\beta_0|\leq\tau_n}\ep[h(\D;\beta_0,\{r_0,m_0,\psi\})-h(\D;\beta_0,\eta_0)]^2=o(1)$.

On the other hand, this modification essentially reduces our requirement on the quality of $\widehat\psi\supmfk(\cdot)$, as mentioned in Remark \ref{rem:4.2.2}. As will be seen from the proof below, to fulfill the modified Condition \ref{cond:b2} (c), we do not require $\|\widehat\psi\supmfk(\cdot)-\bar\psi(\cdot)\|_{P,2}=o_{\Pbb}(\tau_n)$ as one shall when following the original version of \cite{chernozhukov2016double} but only needs $\widehat\psi\supmfk(\cdot)$ to be uniformly consistent (though the former one may still be justifiable in our case since we take $\bar\psi(\x)={\rm expit}\{-\bar r(\x)\}$). We now verify Conditions \ref{cond:b1} and \ref{cond:b2} based on our Assumptions \ref{asu:1} and \ref{asu:4.2.2}.

\begin{proof}
Condition \ref{cond:b1} (a) is directly given by our logistic partial model assumption \ref{model}. Condition \ref{cond:b1} (b) is naturally satisfied as $h(\D;\beta,\eta_0+r(\eta-\eta_0))$ is a twice continuously differentiable in $(\beta,r)$. Condition \ref{cond:b1} (c) is directly given by Assumption \ref{asu:1}. And Condition \ref{cond:b1} (d) holds by equation (\ref{equ:derive}), combined with the model assumptions (\ref{model}) and $\ep [A|\X=\x,Y=0]=m_0(\x)$.

By Assumption \ref{asu:4.2.2} and $\bar\psi(\x)={\rm expit}(-\bar r(\x))$, there exists $\zeta_{1,n}=o(1)$ and $\zeta_{2,n}=o(n^{-1/4})$ such that $\widehat\eta\supmfk$ belongs to
\begin{align*}
\Tsc_n=:\Big\{\eta=(r,m,\psi):&\sup_{\x\in\mathcal{X}}|\psi(\x)-\bar\psi(\x)|+|r(\x)-r_0(\x)|+|m(\x)-m_0(\x)|\leq\zeta_{1,n},\\
&\mbox{and}\quad\|r(\cdot)-r_0(\cdot)\|_{P,2}+\|m(\cdot)-m_0(\cdot)\|_{P,2}\leq\zeta_{2,n}\Big\}, 
\end{align*}
with probability approaching $1$, for each $k\in\{1,\ldots,K\}$. We define $\Tsc_n$ of Condition \ref{cond:b2} in this way such that \ref{cond:b2} (a) is satisfied. Now we validate Condition \ref{cond:b2} (b). By Assumption \ref{asu:1} that $A$ and $\beta$ belong to compact sets, and $|m(\x)|\leq m_0(\x)+o(1)$ is uniformly bounded for $\eta=\{r,m,\psi\}\in\Tsc_n$, there exists positive $C_1=\Theta(1)$ such that for $\eta=\{r,m,\psi\}\in\Tsc_n$,
\begin{align*}
h(\D;\beta,\eta)=&\psi(\X)\{Ye^{-\beta A}-(1-Y)e^{r(\X)}\}\{A-m(\X)\}\\
\leq&|\psi(\X)Ye^{-\beta A}\{A-m(\X)\}|+|\psi(\X)e^{r(\X)}(1-Y)\{A-m(\X)\}|\\
\leq&C_1\{\psi(\X)+\psi(\X)e^{r(\X)}\}=C_1\{{\rm expit}(-\psi(\X))+{\rm expit}(\psi(\X))\}\leq C_1+1;\\
\|\partial_{\beta} h(\D;\beta,\eta)\|_{P,2}^2=&\ep[\psi(\X)Ye^{-\beta A}A\{A-m(\X)\}]^2\leq C_1.
\end{align*}
Then by Example 19.7 of \cite{van2000asymptotic}, Condition \ref{cond:b2} (b) holds with $\nu=1$ and $R$ being the diameter of $\Bsc$. Note that Condition \ref{cond:b2} (d) is again directly given by Assumption \ref{asu:1}. It remains to verify Condition \ref{cond:b2} (c). For each $\eta=\{r,m,\psi\}\in\Tsc_n$ and $\beta\in\Bsc$, using the boundness of $\beta$, $A$, $m_0(\x)$ $\psi(\x)$ and $\psi(\x)e^{r(\x)}$, there exists $C_2=\Theta(1)$ such that 
\begin{align*}
&|\ep h(\D;\beta,\eta)-\ep h(\D;\beta,\{r_0,m_0,\psi\})|\\
\leq&|\ep\psi(\X)Ye^{-\beta A}\{m_0(\X)-m(\X)\}|+|\ep\psi(\X)(1-Y)e^{r(\X)}\{m_0(\X)-m(\X)\}|\\
&+|\ep\psi(\X)e^{r(\X)}(1-Y)\{1-e^{r_0(\X)-r(\X)}\}\{A-m(\X_0)\}|\\
\leq&C_2(\|m_0(\X)-m(\X)\|_{P,2}+\|r_0(\X)-r(\X)\|_{P,2}+\|r_0(\X)-r(\X)\|_{P,2}^2)\leq 3C_2\zeta_{2,n}.
\end{align*}
So we take $\tau_n=n^{-1/4}$ and by $\zeta_{2,n}=o(n^{-1/4})$, the first inequality of Condition \ref{cond:b2} (c) is satisfied. Again by the boundness of $\beta$, $A$, $m_0(\x)$ $\psi(\x)$ and $\psi(\x)e^{r(\x)}$; and $\zeta_{1,n},\tau_n=o(1)$, there exists $C_3=\Theta(1)$ such that
\begin{align*}
&\ep[h(\D;\beta,\eta)-h(\D;\beta_0,\{r_0,m_0,\psi\})]^2+\ep[h(\D;\beta_0,\{r_0,m_0,\psi\})-h(\D;\beta_0,\eta_0)]^2\\
\leq&\ep[h(\D;\beta,\eta)-h(\D;\beta_0,\eta)]^2+\ep[h(\D;\beta_0,\eta)-h(\D;\beta_0,\{r_0,m_0,\psi\})]^2\\
&+\ep[h(\D;\beta_0,\{r_0,m_0,\psi\})-h(\D;\beta_0,\eta_0)]^2\\
\leq&\ep[\psi(\X)e^{-\beta_0A}\{e^{(\beta_0-\beta) A}-1\}\{A-m(\X)\}]^2\\
&+\ep[\psi(\X)e^{-\beta_0 A}\{m_0(\X)-m(\X)\}]^2+\ep[\psi(\X)e^{r(\X)}\{m_0(\X)-m(\X)\}]^2\\
&+\ep[\psi(\X)e^{r(\X)}(1-Y)\{1-e^{r_0(\X)-r(\X)}\}\{A-m_0(\X)\}]^2\\
&+\ep[|\bar\psi(\X)-\psi(\X)|\{e^{-\beta_0 A}+e^{r_0(\X)}\}|A-m_0(\X)|]^2\\
\leq&C_3\sup_{a\in\mathcal{A}}|e^{(\beta_0-\beta) a}-1|+C_3\sup_{x\in\mathcal{X}}\{|(1+e^{r(\x)})[\bar\psi(\x)-\psi(\x)]|+|m(\x)-m_0(\x)|+|e^{r(\x)-r_0(\x)}-1|\}\\
\leq&C_3\sup_{a\in\mathcal{A}}|e^{(\beta_0-\beta) a}-1|+C_3\sup_{x\in\mathcal{X}}\{|\bar\psi(\x)-\psi(\x)|+|m(\x)-m_0(\x)|+\{{\rm expit}(r_0(\x))+1\}|e^{r(\x)-r_0(\x)}-1|\}\\
\leq&C_3\{(e^{\tau_nC_3}-1)+2\zeta_{1,n}+2|e^{\zeta_{1,n}}-1|\}=o(1),
\end{align*}
which validates the second inequality of Condition \ref{cond:b2} (c). At last, for each $r\in(0,1)$, denote by $\beta^*=\beta_0+r(\beta-\beta_0)$, $\eta^*=\{r^*,m^*,\psi\}=\{r_0(\cdot),m_0(\cdot),\psi\}+r(\eta-\{r_0(\cdot),m_0(\cdot),\psi\})$. Similar as above deduction, we have that there exists $C_4=\Theta(1)$,
\begin{align*}
&\partial_r^2\ep h\{\D;\beta_0+r(\beta-\beta_0),\eta_0+r(\eta-\{r_0(\cdot),m_0(\cdot),\psi\})\}\\
=&\ep\psi(\X)Ye^{-\beta^*A}A(\beta-\beta_0)\{m_0(\X)-m(\X)\}\\
&+\ep\psi(\X)(1-Y)e^{r^*(\X)}\{r_0(\X)-r(\X)\}\{m_0(\X)-m(\X)\}\\
\leq&C_4|\beta-\beta_0|\cdot\ep|m_0(\X)-m(\X)|+C_4\ep|r_0(\X)-r(\X)|\cdot\ep|m_0(\X)-m(\X)|\\
=&O(\|m(\cdot)-m_0(\cdot)\|_{P,2}^2)+O\{(\beta-\beta_0)^2\}+O(\|r(\cdot)-r_0(\cdot)\|_{P,2}^2)=O(\zeta_{2,n}^2)+o(\tau_n^2)=o(n^{-1/4}).
\end{align*}
\end{proof}
Using the verified Conditions \ref{cond:b1} and \ref{cond:b2}, one can follow nearly the same proof procedures as those of Theorem 3.3 and Lemma 6.3 in \cite{chernozhukov2016double} to prove our Theorem \ref{thm:2}. The only minor difference concerning the processing of $\psi\supmfk$ has been presented as above. As we point out, one can handle this smoothly by first considering $n^{-1}\sum_{k=1}^K\sum_{i\in\Isc_k} h(\D_i;\beta_0,\{r_0,m_0,\widehat\psi\supmfk\})$ when deriving the initial rate and asymptotic expansion of $\widehat\beta\subml$ as $\ep [h(\D_i;\beta_0,\{r_0,m_0,\widehat\psi\supmfk\})|\widehat\psi\supmfk]=0$, and finally concentrate $n^{-1}\sum_{k=1}^K\sum_{i\in\Isc_k} h(\D_i;\beta_0,\{r_0,m_0,\psi_0\})-h(\D_i;\beta_0,\{r_0,m_0,\widehat\psi\supmfk\})$ using that $\partial_{\psi} \ep h(\D;\beta_0,\eta_0)[\psi-\psi_0]=0$ and $\widehat\psi\supmfk(\cdot)$ is uniformly consistent.

\newpage

\section{Justification of the FMR procedure}\label{sec:app:fmr}

In this section, we derive error rates for the ML estimator $\widehat r\supmfk(\cdot)$ resulted from the FMR procedure introduced in Section \ref{sec:method:ml}. Assume that the learning algorithm $\Lscr$ attains the same strong convergence and rate properties as those for $\widehat m\supmfk(\cdot)$ in Assumption \ref{asu:4.2.2}, i.e. for each $j\in\{1,2,\ldots,K\}$ and $k\in\{1,2,\ldots,K\}$:
\begin{align*}
\sup_{\x\in\mathcal{X},a\in\mathcal{A}}|\widehat M\supmfkj(a,\x)-M_0(a,\x)|=&o_{\Pbb}(1);\quad\|\widehat M\supmfkj(\cdot)-M_0(\cdot)\|_{P,2}=o_{\Pbb}(n^{-1/4});\\
\sup_{\x\in\mathcal{X}}|\widehat a\supmfkj(\x)-a_0(\x)|=&o_{\Pbb}(1);\quad\|\widehat a\supmfkj(\cdot)-a_0(\cdot)\|_{P,2}=o_{\Pbb}(n^{-1/4});\\
\sup_{\x\in\mathcal{X}}|\widehat t\supmfk(\x)-t\supmfkj_{\dag}(\x)|=&o_{\Pbb}(1);\quad\|\widehat t\supmfk(\cdot)-t\supmfkj_{\dag}(\cdot)\|_{P,2}=o_{\Pbb}(n^{-1/4}),
\end{align*}
where $t\supmfkj_{\dag}(\x)=:\ep \left[{\rm logit}\{\widehat M\supmfkj(a,\x) \}\Big|\X=\x,\widehat M\supmfkj(\cdot)\right]$. We justify as follows that 
\[
\sup_{\x\in\mathcal{X}}|\widehat r\supmfk(\x)-r_0(\x)|=o_{\Pbb}(1);\quad\|\widehat r\supmfk(\cdot)-r_0(\cdot)\|_{P,2}=o_{\Pbb}(n^{-1/4}).
\]
First, since ${\rm logit}$ is a smooth function, it is not hard to show that
\[
\sup_{\x\in\mathcal{X},a\in\mathcal{A}}|{\rm logit}\{\widehat M\supmfkj(a,\x)\}-{\rm logit}\{M_0(a,\x)\}|=o_{\Pbb}(1);\quad\|{\rm logit}\{\widehat M\supmfkj(\cdot)\}-{\rm logit}\{M_0(\cdot)\}\|_{P,2}=o_{\Pbb}(n^{-1/4}),
\]
under some mild regularity conditions. Then derive the error rate of $\breve\beta\supmfk$ as follows. 
\begin{align*}
&|\Isc_{\text{-}k}|^{-1}\sum_{j=1}^K\sum_{i\in\Isc_{\text{-}k,j}}{\rm logit}\{\widehat M\supmfkj(A_i,\X_i)\}\{A_i-\widehat a\supmfkj(\X_i)\}\\
=&|\Isc_{\text{-}k}|^{-1}\sum_{j=1}^K\sum_{i\in\Isc_{\text{-}k,j}}{\rm logit}\{M_0(A_i,\X_i)\}\{A_i-a_0(\X_i)\}\\
&+|\Isc_{\text{-}k}|^{-1}\sum_{j=1}^K\sum_{i\in\Isc_{\text{-}k,j}}[{\rm logit}\{\widehat M\supmfkj(A_i,\X_i)\}-{\rm logit}\{M_0(A_i,\X_i)\}]\{A_i-a_0(\X_i)\}\\
&+|\Isc_{\text{-}k}|^{-1}\sum_{j=1}^K\sum_{i\in\Isc_{\text{-}k,j}}{\rm logit}\{M_0(A_i,\X_i)\}\{a_0(\X_i)-\widehat a\supmfkj(\X_i)\}\\
&+|\Isc_{\text{-}k}|^{-1}\sum_{j=1}^K\sum_{i\in\Isc_{\text{-}k,j}}[{\rm logit}\{\widehat M\supmfkj(A_i,\X_i)\}-{\rm logit}\{M_0(A_i,\X_i)\}]\{a_0(\X_i)-\widehat a\supmfkj(\X_i)\}\\
=&|\Isc_{\text{-}k}|^{-1}\sum_{j=1}^K\sum_{i\in\Isc_{\text{-}k,j}}{\rm logit}\{M_0(A_i,\X_i)\}\{A_i-a_0(\X_i)\}\\
&+\|\widehat a\supmfkj(\cdot)-a_0(\cdot)\|_{P,2}+\|{\rm logit}\{\widehat M\supmfkj(\cdot)\}-{\rm logit}\{M_0(\cdot)\}\|_{P,2}+O_{\Pbb}(n^{-1/2})\\
=&|\Isc_{\text{-}k}|^{-1}\sum_{j=1}^K\sum_{i\in\Isc_{\text{-}k,j}}{\rm logit}\{M_0(A_i,\X_i)\}\{A_i-a_0(\X_i)\}+o_{\Pbb}(n^{-1/4})+O_{\Pbb}(n^{-1/2}),
\end{align*}
under some mild regularity conditions. Similarly, we have
\begin{align*}
|\Isc_{\text{-}k}|^{-1}\sum_{j=1}^K\sum_{i\in\Isc_{\text{-}k,j}}\{A_i-\widehat a\supmfkj(\X_i)\}^2=|\Isc_{\text{-}k}|^{-1}\sum_{j=1}^K\sum_{i\in\Isc_{\text{-}k,j}}\{A_i-a_0(\X_i)\}^2+o_{\Pbb}(n^{-1/4})+O_{\Pbb}(n^{-1/2}).
\end{align*}
And consequently, by Proposition \ref{prop:1} and 
\[
\breve\beta\supmfk=\frac{\sum_{i\in\Isc_{\text{-}k}}{\rm logit}\{M_0(A_i,\X_i)\}\{A_i-a_0(\X_i)\}}{\sum_{i\in\Isc_{\text{-}k}}\{A_i-a_0(\X_i)\}^2}+o_{\Pbb}(n^{-1/4})=\beta_0+o_{\Pbb}(n^{-1/4}).
\]
Then by Assumption \ref{asu:1} that $\beta_0$ and $|a_0(\x)|$ are bounded and recall $\widehat a\supmfk(\x)=K^{-1}\sum_{j=1}^K\widehat a\supmfkj(\x)$, the estimator $\widehat r\supmfk(\cdot)$ given by equation (\ref{equ:est:r:1}) satisfies that:
\begin{align*}
&\sup_{\x\in\mathcal{X}}|\widehat r\supmfk(\x)-r_0(\x)|\\
\leq&\sup_{\x\in\mathcal{X}}|\widehat t\supmfk(\x)-t_0(\x)|+|\breve\beta\supmfk-\beta_0||a_0(\x)|+|\breve\beta\supmfk||\widehat a\supmfk(\x)-a_0(\x)|\\
\leq&\sup_{\x\in\mathcal{X},j}|\widehat t\supmfk(\x)-t\supmfkj_{\dag}(\x)|+|t\supmfkj_{\dag}(\x)-t_0(\x)|+o_{\Pbb}(1)=o_{\Pbb}(1),
\end{align*}
where $\sup_{\x\in\mathcal{X},j}|t\supmfkj_{\dag}(\x)-t_0(\x)|=o_{\Pbb}(1)$ is a consequence of $\sup_{\x\in\mathcal{X},a\in\mathcal{A}}|{\rm logit}\{\widehat M\supmfkj(a,\x)\}-{\rm logit}\{M_0(a,\x)\}|=o_{\Pbb}(1)$. And
\begin{align*}
&\|\widehat r\supmfk(\cdot)-r_0(\cdot)\|_{P,2}\\
\leq& \|\widehat t\supmfk(\cdot)-t_0(\cdot)\|_{P,2}+|\breve\beta\supmfk-\beta_0|\|a_0(\x)\|_{P,2}+|\breve\beta\supmfk|\|\widehat a\supmfk(\x)-a_0(\x)\|_{P,2}\\
=&\max_{j\in\{1,\ldots,K\}}\|t\supmfkj_{\dag}(\cdot)-t_0(\cdot)\|_{P,2}+o_{\Pbb}(n^{-1/4})\\
=&\max_{j\in\{1,\ldots,K\}}\|\ep[{\rm logit}\{\widehat M\supmfkj(A,\X)\}|\X=\x]-\ep[{\rm logit}\{M_0(A,\X)\}|\X=\x]\|_{P,2}+o_{\Pbb}(n^{-1/4})\\
\leq&\max_{j\in\{1,\ldots,K\}}\|{\rm logit}\{\widehat M\supmfkj(\cdot)\}-{\rm logit}\{M_0(\cdot)\}\|_{P,2}+o_{\Pbb}(n^{-1/4})=o_{\Pbb}(n^{-1/4}).
\end{align*}
Thus, $\widehat r\supmfk(\x)$ satisfies Assumption \ref{asu:4.2.2} under our assumption for the learning tasks of $\Lscr$.

\newpage
\section{Numerical implementation of the HD approach}\label{sec:app:num}

We present and demonstrate the implementation procedure of our HD approach mentioned in Remark \ref{rem:3.1} that uses LASSO instead of the dantzig equation and modifies the construction procedures to make it solvable using the {\bf R}-package \emph{RCAL}. Let $G(u)=\int g(u)du$, and recall that $\widetilde\bgamma$ is some initial estimator obtained through $\ell_1$-regularized logistic regression for $Y$ versus $\{A,\X\}$ and $\widehat\psi(\x)={\rm expit}(-\x\trans\widetilde\bgamma)$. Then we fit 
\begin{equation}
\min_{\balpha\in\mathbb{R}^p}n^{-1}\sum_{i=1}^n(1-Y_i)\widehat\psi(\X_i)e^{\X_i\trans\widetilde\bgamma}\left\{-A_i\X_i\trans\balpha+G(\X_i\trans\balpha)\right\}+\lambda_{\alpha}\|\balpha\|_1,
\label{equ:lasso:m}
\end{equation}
to obtain $\widehat\balpha$. It is not hard to see that the KKT (or subgradient) condition of (\ref{equ:lasso:m}) is equivalent to the $\ell_{\infty}$-constraint in (\ref{equ:dant:m}). And when the link function of $g(\cdot)$ is identity (liner model) or ${\rm expit}(\cdot)$ (logistic model), (\ref{equ:lasso:m}) can be solved using the {\bf R}-package \emph{glmnet} with proper specification of the sample weights, i.e. $(1-Y_i)\widehat\psi(\X_i)e^{\X_i\trans\widetilde\bgamma}$. Then we solve 
\[
n^{-1}\sum_{i=1}^n\widehat\psi(\X_i)\{Y_ie^{-\beta A_i}-(1-Y_i)e^{\X_i\trans\widetilde\bgamma}\}\{A_i-g(\X_i\trans\widehat\balpha)\}=0,
\]
to obtain a preliminary estimator $\widetilde\beta$. It can be shown that when either $r(\x)$ or $m(\x)$ is correctly specified, the estimator $\widetilde\beta$ should approach $\beta_0$ at the rate $O_{\Pbb}\{(s\log p/n)^{1/2}\}$, i.e. the $\ell_2$ errors of $\widetilde\bgamma$ and $\widehat\balpha$. So $\widetilde\beta$ provides a good enough approximation of $\beta_0$ that can be used for the $\ell_1$-regularized (weighted) calibration regression \citep{tan2020regularized} to estimate $\bgamma$:
\begin{equation}
{\rm min}_{\beta\in\mathbb{R},\bfeta\in\mathbb{R}^p}n^{-1}\sum_{i=1}^n\widehat\psi(\X_i)e^{\X_i\trans\widetilde\bgamma}g'(\X_i\trans\widehat\balpha)\{Y_ie^{-\widetilde\beta A_i-\X_i\trans\bgamma}+(1-Y_i)(\widetilde\beta A_i+\X_i\trans\bgamma)\}+\lambda_{\gamma}\|\bgamma\|_1.
\label{equ:lasso:r}    
\end{equation}
Again, KKT condition of (\ref{equ:lasso:r}) corresponds to the $\ell_{\infty}$-constraints in (\ref{equ:dant:r}), though they are not always imposing the same moment conditions: when  the nuisance model $r(\x)$ is misspecified, $\widetilde\bgamma$ and $\widehat\bgamma$ typically have different limits. We use {\bf R}-package \emph{RCAL} to solve (\ref{equ:lasso:r}) with the response taken as $Y_i$, regressors as $\X_i$, sample weight $\widehat\psi(\X_i)e^{\X_i\trans\widetilde\bgamma}g'(\X_i\trans\widehat\balpha)$ and offset $\widetilde\beta A_i$ for each $i$. Denoting the solution of (\ref{equ:lasso:r}) as $\widehat\bgamma$, we finally obtain the estimator $\widehat\beta\subhd$ by solving
\begin{equation}
n^{-1}\sum_{i=1}^n\widehat\psi(\X_i)e^{\X_i\trans(\widetilde\bgamma-\widehat\bgamma)}\{Y_ie^{-\beta A_i}-(1-Y_i)e^{\X_i\trans\widehat\bgamma}\}\{A_i-g(\X_i\trans\widehat\balpha)\}=0.
\label{equ:lasso:beta}   
\end{equation}
Here the final estimating equation is asymptotically equivalent to the second row of (\ref{equ:dant:r}) only when $r(\x)$ is correctly specified ($\widetilde\bgamma$ and $\widehat\bgamma$ have the same limiting values). When $r(\x)$ is misspecified, the orthogonal score function used in (\ref{equ:lasso:beta}), denoted by $h'(\D,\beta_0,\eta)$, is not the same as the $h(\D,\beta_0,\eta)$ used in the main text. We shall point out that this does not hurt the Neyman orthogonality of $h'(\D,\beta_0,\eta)$. It is because that when $r(\x)$ is misspecified but $m(\x)$ is correct (by our model assumption, at least one nuisance model need to be correct), $\partial_{r} h'(\D;\beta_0,\bar\eta)[r-\bar r]$ is naturally satisfied due to the correctness of $m(\x)$ and $\partial_{m} h'(\D;\beta_0,\bar\eta)[m-\bar m]$ is satisfied according to the KKT (moment) condition of (\ref{equ:lasso:r}). When $m(\x)$ is misspecified but $r(\x)$ is correct, $\partial_{m} h'(\D;\beta_0,\bar\eta)[m-\bar m]$ is naturally satisfied and (\ref{equ:lasso:beta}) is asymptotically equivalent with 
\[
n^{-1}\sum_{i=1}^n\widehat\psi(\X_i)\{Y_ie^{-\beta A_i}-(1-Y_i)e^{\X_i\trans\widehat\bgamma}\}\{A_i-g(\X_i\trans\widehat\balpha)\}=0,
\]
as $\widetilde\bgamma$ and $\widehat\bgamma$ approach the true $\bgamma_0$ and the second order errors is asymptotically negligible. So $\partial_{r} h'(\D;\beta_0,\bar\eta)[r-\bar r]$ is satisfied by (\ref{equ:lasso:r}). Thus, our modified construction procedure does not break the theoretical guarantee of $\widehat\beta\subhd$.

\newpage

\section{Additional details of numerical experiments}\label{sec:app:sim:detail}

First, we present the mean vector and covariance matrix used to generate $A$ and $\X$ in Section \ref{sec:sim:hd}:

\begin{enumerate}[(i)]
\item Take $\bmu_1=(0.4,-0.25,-0.25,0,\ldots,0)$, $\bmu_0=\bold{0}$, and
$$({\boldsymbol{\Sigma}}^{-1})_{ij}=
\begin{cases}
1.5& {i=j=1}\\
1.2& {i=j\geq2}\\
0.2& {i=1, 2\leq j \leq5 \ or\  2\leq i \leq5, j=1}\\
0& \text{else}
\end{cases}$$

\item Take $\bmu_1=(0.4,-0.25,-0.25,0,\ldots,0)$, $\bmu_0=\bold{0}$,
    $$({\boldsymbol{\Sigma}_1}^{-1})_{ij}=
    \begin{cases}
1.5& {i=j=1}\\
1.2& {i=j\geq2}\\
0.2& {i=1, 2\leq j \leq5 \ or\  2\leq i \leq5, j=1}\\
0& \text{else}
\end{cases}$$
and 
 $$({\boldsymbol{\Sigma}_0}^{-1})_{ij}=
    \begin{cases}
1.5& {i=j=1}\\
1.2& {i=j\geq2}\\
0.2& {i=1, 2\leq j \leq5 \ or\  2\leq i \leq5, j=1}\\
0.075& {3\leq i \leq4,j=2 \ or\ i=2,3\leq i \leq4 \ or\ i=3,j=4\ or \ i=4,j=3}\\
0& \text{else}
\end{cases}$$

\item Take the covariance of $\X$ as 
 $${\boldsymbol{\Sigma}}_{ij}=
\begin{cases}
0.5& {i=j}\\
0.15& {i\leq4, j \leq4, i\neq j}\\
0& \text{else}
\end{cases}.$$

\end{enumerate}
Then we present the specific choice on the basis functions for data generation in Section \ref{sec:sim:ml}. In specific, we take 
\begin{align*}
f_a(\boldsymbol{x})&=\left\{\frac{1}{1+e^{x_1}},\frac{1}{1+e^{x_2}},{\rm sin}(x_3),{\rm cos}(x_4),I(x_5>0),I(x_6>0),x_7x_8,x_9x_{10}\right\}\trans;\\
\boldsymbol{\zeta}_a&=(2,-2,1,1,0.5,-0.5,0.2,0.2)\trans,
\end{align*}
for $a_0(\x)$, the conditional mean of $A$ given $\X=\x$. And that
\begin{align*}
f_r(\boldsymbol{x})&=\left\{x_1 x_2 x_3, x_4 x_5, x_6^3,{\rm sin}^2(x_7), {\rm cos} (x_8), \frac{1}{1+x_9^2}, \frac{1}{1+e^{x_{10}}}, I(x_{11}>0),I(x_{12}>0)\right\}\trans;\\
\boldsymbol{\zeta}_r &= (0.1,0.1,0.1,-0.5,0.5,1,-1,0.25,-0.25)\trans,
\end{align*}
to specify $r_0(\x)$. 

\end{document}